    \date{}
\newcommand{\new}[1]{{\color{teal} #1}}
\renewcommand{\new}{}
\title{Moderate Dimension Reduction for $k$-Center Clustering\footnote{A preliminary version appeared in SoCG 2024 \cite{JiangKS24}.
  \new{
    This full version includes omitted proofs and additional material, 
    particularly \Cref{sec:streaming_low_dim,sec:doubling,sec:fpq_appendix,sec:all_pairs_moderate_dim},
    the proofs of \Cref{lem:ball_expansion_gaussians} and \Cref{thm:constrained_k_center}, and a discussion on the optimality of \Cref{thm:dim_reduction_informal}.}
  }}
    \author{Shaofeng H.-C. Jiang\thanks{
        Research partially supported by a national key R\&D program of China No. 2021YFA1000900 and
        a startup fund from Peking University.
        Email: \texttt{shaofeng.jiang@pku.edu.cn}
      }
      \\ Peking University
      \and Robert Krauthgamer\thanks{Work partially supported 
          by the Israel Science Foundation grant \#1336/23
          and the Weizmann Data Science Research Center.
          Email: \texttt{robert.krauthgamer@weizmann.ac.il} 
        } 
        \\ Weizmann Institute of Science 
        \and Shay Sapir\thanks{Partially supported by the Israeli Council for Higher Education (CHE) via the Weizmann Data Science Research Center.
            Email: \texttt{shay.sapir@weizmann.ac.il} 
        } 
        \\ Weizmann Institute of Science
    }
\begin{document}

\maketitle

\begin{abstract}
The Johnson-Lindenstrauss (JL) Lemma introduced the concept of dimension reduction via a random linear map, which has become a fundamental technique in many computational settings. For a set of $n$ points in $\mathbb{R}^d$ and any fixed $\epsilon>0$, it reduces the dimension $d$ to $O(\log n)$ while preserving, with high probability, all the pairwise Euclidean distances within factor $1+\epsilon$. Perhaps surprisingly, the target dimension can be lower if one only wishes to preserve the optimal value of a certain problem on the pointset, e.g., Euclidean max-cut or $k$-means. However, for some notorious problems, like diameter (aka furthest pair), dimension reduction via the JL map to below $O(\log n)$ does not preserve the optimal value within factor $1+\epsilon$.

We propose to focus on another regime, of \emph{moderate dimension reduction}, where a problem's value is preserved within factor $\alpha>1$ using target dimension $\log n / \mathrm{poly}(\alpha)$. We establish the viability of this approach and show that the famous $k$-center problem is $\alpha$-approximated when reducing to dimension $O(\tfrac{\log n}{\alpha^2}+\log k)$. Along the way, we address the diameter problem via the special case $k=1$. Our result extends to several important variants of $k$-center (with outliers, capacities, or fairness constraints), and the bound improves further with the input's doubling dimension. 

While our $\mathrm{poly}(\alpha)$-factor improvement in the dimension may seem small, it actually has significant implications for streaming algorithms, and easily yields an algorithm for $k$-center in dynamic geometric streams, that achieves $O(\alpha)$-approximation using space $\mathrm{poly}(kdn^{1/\alpha^2})$. This is the first algorithm to beat $O(n)$ space in high dimension $d$, as all previous algorithms require space at least $\exp(d)$. Furthermore, it extends to the $k$-center variants mentioned above.
\end{abstract}

\section{Introduction}

The seminal work of Johnson and Lindenstrauss~\cite{article_JohnsonLindenstrauss}
introduced the technique of dimension reduction via an (oblivious) random linear map,
and this technique has become fundamental in many computational settings, 
from offline to streaming and distributed algorithms,
especially nowadays that high-dimensional data is ubiquitous.  
Their so-called JL Lemma asserts (roughly) that for any fixed $\epsilon>0$,
a random mapping (e.g., projection) of a set $P\subset\R^d$ of $n$ points to target dimension $t=O(\log n)$
preserves the Euclidean distances between all points in $P$ within $1+\epsilon$ factor.
Furthermore, the JL Lemma is known to be tight~\cite{DBLP:conf/focs/LarsenN17},
see also the recent survey~\cite{Nelson20}.

Perhaps surprisingly,
the target dimension can sometimes be reduced below that $O(\log n)$ bound, 
particularly when one only wants to preserve the optimal value of a specific objective function
rather than all pairwise distances. 
Indeed, for several optimization problems on the pointset $P$, 
previous work has shown that the target dimension
may be much smaller than $O(\log n)$ or even independent of $n$, e.g., for Euclidean max-cut \cite{DBLP:conf/wads/LammersenSS09, lammersen2010approximation,DBLP:conf/stoc/ChenJK23}, $k$-means \cite{DBLP:conf/nips/BoutsidisZD10, DBLP:conf/stoc/CohenEMMP15, DBLP:conf/stoc/BecchettiBC0S19,DBLP:conf/stoc/MakarychevMR19}, Wasserstein barycenter~\cite{IzzoSZ21} and subspace approximation \cite{DBLP:journals/corr/CharikarWaingarten23}.
However, for some notorious problems, 
like facility location, minimum spanning tree~\cite{DBLP:conf/icml/NarayananSIZ21}
and diameter (aka furthest pair), 
dimension reduction via the JL map to below $O(\log n)$
does not preserve the optimal value within factor $1+\epsilon$.

\medskip

In light of this, we consider a new regime of 
\emph{moderate dimension reduction},
where a problem's value is approximated within factor $O(1)$
using target dimension slightly below $O(\log n)$.\footnote{A related approach called low-quality dimension reduction
was proposed in~\cite{DBLP:journals/talg/Anagnostopoulos18}.
Similarly to ours, it refers to a map with target dimension below $O(\log n)$, but its
utility is to $(1+\epsilon)$-approximate Nearest-Neighbor Search (NNS),
and since NNS is not an optimization problem, 
our definitions and techniques are rather different. 
}
More precisely, we aim at the following tradeoff:
achieving $\alpha$-estimation,\footnote{
We say that dimension reduction map $f$ achieves $\alpha$-estimation if $1\leq \opt(f(P))\leq \alpha\opt(P)$, where $\opt$ is the optimal value of the problem;
and
we say it achieves $\alpha$-approximation
if it also preserves solutions, in a sense that we define formally later.
Our results extend to $O(\alpha)$-approximation, but we focus here on $O(\alpha)$-estimation for clarity.}
for any desired $\alpha>1$,
when reducing to dimension $\log n/\poly(\alpha)$.
This relaxation of the approximation factor, from $1+\epsilon$ to $\alpha$,
may be effective in combating the ``curse of dimensionality'' phenomenon,
because when an algorithm's efficiency is exponential in the dimension
(e.g., space complexity in streaming algorithms), 
bounds of the form $2^{O(\log n)} = n^{O(1)}$
improve to $2^{\log n/\poly(\alpha)}=n^{1/\poly(\alpha)}$.
This tradeoff can even yield target dimension $t=O(1)$,
by using approximation $\alpha = \polylog(n)$,
and this can lead to new results (e.g., streaming algorithms).
\new{Note that the interesting regime for the target dimension bound
is indeed $\log n/\poly(\alpha)$. 
A larger target dimension $t=O(\tfrac{\log n}{\log \alpha})$, 
can preserve \emph{all the pairwise distances} within factor $\alpha$, 
see \Cref{sec:all_pairs_moderate_dim} for a proof, 
however this bound might be less useful, e.g., to get target dimension $t=O(1)$, the approximation becomes $\alpha=\poly(n)$.}

We study this regime of moderate dimension reduction
for the fundamental problem of $k$-center clustering:
The input is a set $P\subset \R^d$ of $n$ points,
and the goal is to find a set of centers $C\subset \R^d$ of size $k$ that minimizes the objective
$\max_{p\in P}\dist(p,C)$,
where we denote $\dist(p,C) = \min_{c\in C}\|p-c\|$
(throughout, we use $\ell_2$ norm). The special case $k=1$ is exactly
the minimum enclosing ball problem,
which is within factor $2$ of the diameter problem,
and for both problems, we show that reducing the dimension to $o(\log n)$
is unlikely to achieve $(1+\epsilon)$-estimation. 
Our main contribution is
a general framework for moderate dimension reduction
that works even 
for more challenging and widely studied variants of $k$-center,
such as the outliers~\cite{DBLP:conf/soda/CharikarKMN01}, 
capacitated~\cite{Bar-IlanKP93,DBLP:journals/siamdm/KhullerS00},
and fair~\cite{DBLP:conf/nips/Chierichetti0LV17} variants.

\subsection{Main Results}
\label{sec:main_results}

Our main result is a moderate dimension reduction for $k$-center
via an (oblivious) random linear map. 
For simplicity, 
we consider a map 
defined via
a matrix $G\in\R^{d\times t}$ of iid Gaussians (scaled appropriately),
although our result may possibly extend to other JL maps,
similarly to prior work in this context~\cite{DBLP:conf/stoc/MakarychevMR19,DBLP:conf/stoc/ChenJK23}.
One property that we often need, beyond distortion of distances, is that $G$ has
a \emph{sub-Gaussian tail}, as described in \Cref{sec:technical_overview}.

\begin{theorem}[Main Result]
\label{thm:dim_reduction_informal}
For every $\alpha,d,k$ and $n$, 
there is a random linear map $G:\R^d\to\R^t$
with target dimension $t=O(\frac{\log n}{\alpha^2} + \log k)$,
such that
for every set $P\subset \R^d$ of $n$ points, with high probability,
$G$ preserves the $k$-center value of $P$ within $O(\alpha)$ factor.
This result extends to $k$-center variants 
as listed in~\Cref{table:dim_red_k_center_variants}.
\end{theorem}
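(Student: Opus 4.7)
The plan is to prove $O(\alpha)$-estimation in both directions with high probability over $G$: the \emph{upper} direction $\mathrm{OPT}_t \leq O(\alpha)\,\mathrm{OPT}_d$ and the \emph{lower} direction $\mathrm{OPT}_d \leq O(\alpha)\,\mathrm{OPT}_t$, where $\mathrm{OPT}_d$ and $\mathrm{OPT}_t$ denote the optimal $k$-center values of $P$ and of $G(P)$, respectively.

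The upper direction follows by a standard JL-style argument. Let $C^* = \{c^*_1,\dots,c^*_k\} \subseteq \R^d$ be an optimal $k$-center solution of value $\mathrm{OPT}_d$, and for each $p \in P$ let $c^*_{j(p)}$ be its assigned (nearest) center, so $\|p - c^*_{j(p)}\| \leq \mathrm{OPT}_d$. I would use $G(C^*)$ as candidate centers in $\R^t$. The sub-Gaussian tail of $G$ yields, for every fixed $x \in \R^d$, $\Pr[\|Gx\| > C\alpha\|x\|] \leq e^{-\Omega(\alpha^2 t)}$. A union bound over just the $n$ pairs $\{(p, c^*_{j(p)})\}_{p \in P}$ (one per input point, not all $nk$ pairs) ensures $\|G(p) - G(c^*_{j(p)})\| \leq O(\alpha)\,\mathrm{OPT}_d$ simultaneously for all $p$, with high probability when $t = \Omega(\log n/\alpha^2)$, well within the dimension budget.

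The lower direction is the main technical obstacle, because the $\R^t$-optimal centers $\tilde C^*$ depend on the random map $G$, so the relevant pairs cannot be fixed in advance. The plan is: (a) apply the standard $2$-approximation reduction for $k$-center to assume $\tilde C^* = G(S^*)$ for some $S^* \subseteq P$ of size $k$; (b) use $S^*$ itself as centers in $\R^d$; and (c) show that $\|p - c\|_d = O(\alpha)\,\mathrm{OPT}_t$ for each $p$ and its assigned $c \in S^*$, which reduces to showing that no pair $(p,q)$ with $\|p-q\|_d \gg \alpha\,\mathrm{OPT}_t$ lands within $\R^t$-distance $2\,\mathrm{OPT}_t$ of each other under $G$. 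The difficulty lies in step (c): a naive union bound over all $n^2$ pairs, using the chi-squared lower-tail estimate $\Pr[\|Gx\| \leq \|x\|/\alpha] \lesssim (e/\alpha^2)^{t/2}$, yields only $t = \Omega(\log n/\log\alpha)$, which exceeds our target $O(\log n/\alpha^2 + \log k)$. The intended way to recover the tighter bound is to enumerate over candidate $k$-subsets $S \subseteq P$, contributing the additive $\log k$ term via a count of the relevant $k$-center Voronoi partitions, while concentrating at the cluster level rather than per pair, so that the failure probability per subset scales as $e^{-\Omega(\alpha^2 t)}$ instead of $(e/\alpha^2)^{t/2}$. Balancing these two probabilistic estimates so that the total failure probability is small while the target dimension stays at $O(\log n/\alpha^2 + \log k)$ is the central challenge I expect to face.
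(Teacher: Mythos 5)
Your upper direction is fine: a union bound over the $n$ pairs $(p,c^*_{j(p)})$ with the sub-Gaussian tail at deviation $r=\Theta(\alpha)$ indeed gives expansion $O(\alpha)$ with failure probability $n\cdot e^{-\Omega(\alpha^2 t)}$, which is small for $t=\Omega(\log n/\alpha^2)$. (The paper instead proves a uniform operator-norm bound, \Cref{lem:ball_expansion_gaussians}, showing $\sup_{\|x\|\le 1}\|Gx\|=O(\sqrt{d})$ w.h.p., which gives the same conclusion for all of $\R^d$ at once; either route works for this half.)

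The lower direction is where your plan has a genuine gap, and the fix is not a better balancing of tail bounds but a different idea. Enumerating candidate center sets $S\subseteq P$ of size $k$ means union-bounding over $\binom{n}{k}\approx n^k$ sets, so even a per-set failure probability of $e^{-\Omega(\alpha^2 t)}$ would force $t=\Omega(k\log n/\alpha^2)$, far above the budget; worse, the event you need per set is that no far pair contracts by a factor $\gg\alpha$, and the contraction probability of a single vector by factor $\alpha$ is $\approx \alpha^{-t}$, not $e^{-\Omega(\alpha^2 t)}$ (this is exactly the source of the paper's $\Omega(\log k/\log\alpha)$ lower bound), so the "cluster-level concentration" you hope for cannot exist in that form. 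What the paper does instead is construct a \emph{witness} that is fixed before $G$ is drawn: run Gonzalez's algorithm on $P$ for $k+1$ steps to get $S\subset P$ with $|S|=k+1$ and all pairwise distances at least $\OPT_{vanilla}(P)$. Then for \emph{every} center set $C_G\subset\R^t$ (in particular the $G$-dependent optimum), the pigeonhole principle gives $\max_{s\in S}\dist(Gs,C_G)\ge\tfrac12\min_{s_1,s_2\in S}\|G(s_1-s_2)\|$, so it suffices that the $O(k^2)$ pairwise distances inside $S$ (plus $C^*$) are preserved up to a \emph{constant} factor, which by \Cref{fact:JL_lemma_one_vector} needs only $t=O(\log k)$; the factor $\alpha$ then comes out of comparing the $\sqrt{t}$ scale of these preserved distances with the $O(\sqrt{d})$ worst-case expansion, i.e.\ $\sqrt{d/t}=O(\alpha)$. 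This removes any union bound over data-dependent solutions and any need for anti-concentration at scale $\alpha$. (The variants are handled similarly: a size-$O(kz)$ witness via repeated Gonzalez "peeling" for outliers, and a movement-to-vanilla-centers argument rather than a witness for assignment constraints.)
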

\begin{remark}
We actually prove this theorem for target dimension $t=O(\tfrac{d}{\alpha^2}+ \log k)$, which is only stronger; indeed, \new{in \Cref{sec:JL_preserves_constrained_kcenter} we prove that by the JL Lemma and an extension theorem,}
one can assume that $d=O(\log n)$.
\end{remark}

\renewcommand{\arraystretch}{1.15}
\begin{table*}[!t]
\begin{center}
\caption{Dimension reduction bounds for $O(\alpha)$-estimation of $k$-center variants in terms of $D~=~\min(d,\log n)$.
The lower bound (marked by *) holds only for a matrix of iid Gaussians.
\label{table:dim_red_k_center_variants} 
}

\begin{tabular}{|l|c|l|}
\hline
$k$-center variant & target dimension & reference \\ 
\hline
\hline
all variants &  $O(D)$ & JL Lemma \\
  & $\Omega(\frac{D}{\alpha^2} + \frac{\log k}{\log \alpha})$ * & 
        \Cref{sec:lower_bound_gaussian_map}
    \\
  \hline
vanilla  & $O(\frac{D}{\alpha^2} + \log k)$ & \Cref{thm:dimension_reduction_for_kcenter} \\

\hline
with $z$ outliers  & $O(\frac{D}{\alpha^2} + \log(kz))$ & \Cref{thm:outliers_k_center} \\
\hline
assignment constrained & $O(\frac{D}{\alpha} + \log k)$ & \Cref{thm:constrained_k_center}  \\
\hline
\end{tabular}

\end{center}
\end{table*}

Our bound is nearly optimal
when $G$ is a matrix of iid Gaussians (and therefore, it plausibly holds for all JL maps).
Concretely, even for $k=1$ and the diameter (which are within factor $2$ of each other), the target dimension must be $t=\Omega(\tfrac{\log n}{\alpha^2})$, which matches the leading term in our bound; and
the second term is nearly matched by an $\Omega(\tfrac{\log k}{\log \alpha})$ bound.
For more details, see 
    \Cref{sec:lower_bound_gaussian_map}.

To demonstrate that moderate dimension reduction
is a general approach that may be applied more broadly, 
our \Cref{thm:dim_reduction_informal} includes 
non-trivial extensions to several important variants of $k$-center. 
From here on, we call the classic $k$-center mentioned above the ``vanilla'' variant, essentially to distinguish it from other variants that we now discuss.
In the variant \emph{with outliers} (aka robust $k$-center), 
the input specifies also $z\ge 0$,
and the goal is to find, in addition to the set of centers $C$,
a set of at most $z$ outliers $Z\subseteq P$, 
that minimizes the objective $\max_{p\in P\setminus Z}\dist(p,C)$
\cite{DBLP:conf/soda/CharikarKMN01}. 
\new{
Two other variants are the capacitated variant,
where every cluster has a bounded capacity~\cite{Bar-IlanKP93,DBLP:journals/siamdm/KhullerS00},
and the fair variant, where input points have colors 
and the relative frequency of colors in every cluster must be similar
to that of the entire input~\cite{DBLP:conf/nips/Chierichetti0LV17}.
These two variants are special cases of $k$-center \emph{with an assignment constraint}, which asks to assign each input point to one of the $k$ clusters
(not necessarily to the closest center), 
given a constraint on the entire assignment,
and the goal is to minimize the maximum cluster radius.
More formally, an \emph{assignment} is a map $\pi:[n]\to [k]$, and
an \emph{assignment constraint} is a partition of all possible assignments
into feasible and infeasible ones, formalized as $\AC:[k]^n\to \{0,1\}$.\footnote{To view a partition of an $n$-point dataset $P$ to $k$ clusters as an assignment,
represent $P$ by $[n]$ and the clusters by $[k]$, 
in an arbitrary manner (not by the geometry of the points).}
\begin{definition}
    In \emph{$k$-center with an assignment constraint $\AC$},
    the input is a set $P\subset \R^d$ of $n$ points,
    and the goal is to partition $P$ into $k$ sets (called clusters)
    in a manner feasible according to $\AC$ when viewed as an assignment,
    so as to minimize the maximum cluster radius.
\end{definition}
This formulation via a constrained assignment
has been studied before for other clustering problems~\cite{DBLP:conf/waoa/SchmidtSS19, DBLP:conf/nips/HuangJV19, DBLP:conf/icalp/BandyapadhyayFS21, DBLP:conf/focs/BravermanCJKST022}
but not for $k$-center.
It is useful as a generalization that captures both the capacitated variant
and the fair variant.
}
Perhaps surprisingly, the dimension reduction satisfies a strong ``for all'' guarantee with respect to the assignment constraints,
i.e., with high probability the optimal value is preserved simultaneously for all possible constraints $\AC:[k]^n\to \{0,1\}$.
This is in contrast to the weaker ``for each''  guarantee, 
where the high-probability bound applies separately to each constraint $\AC$.
The extensions to $k$-center variants 
come at the cost of a slightly increased target dimension,
as listed in~\Cref{table:dim_red_k_center_variants}.

Another important feature of our approach is that it actually preserves solutions,
i.e., \Cref{thm:dim_reduction_informal} extends
from $O(\alpha)$-estimation to $O(\alpha)$-approximation, as follows. 
For vanilla $k$-center, \new{a solution of $P$
is a set $C\subset \R^d$ of size $k$,
with cost (objective value) $\max_{p\in P} \dist(p,C)$.
Generally, a solution is called \emph{$\alpha$-approximation} if its cost 
is within factor $\alpha$ from the optimal.}
The proof of \Cref{thm:dim_reduction_informal} can be extended 
to show that a set $C\subseteq P$ is an $O(\alpha)$-approximate solution of $P$
whenever $G(C)$ is an $O(1)$-approximate solution of $G(P)$.\footnote{For a set $X\subseteq \R^d$ and matrix $G\in\R^{t\times d}$, denote $G(X) = \{Gx:\ x\in X\}$.}
The above restriction to $C\subseteq P$
is needed in our proof to guarantee that $C\mapsto G(C)$ is one-to-one, 
but one can then easily relax it to our true requirement $C\subset\R^d$
by introducing a factor of $2$ in the approximation ratio. 
A similar extension to $O(\alpha)$-approximation holds also for the aforementioned variants of $k$-center,
\new{
although the definitions for solutions slightly differ.
In the variant with $z$ outliers, a solution is a set $C\subset\R^d$ of $k$ centers and a set $Z\subseteq P$ of $z$ outliers, with cost $\max_{p\in P\setminus Z} \dist(p,C)$.
A solution to the variant with an assignment constraint is a feasible partition of $P$ into $k$ sets (notice it is \emph{not} based on centers).
}

\medskip

Still, an important question remains --- how useful is this theorem, 
and more generally, this regime of moderate dimension reduction?
It may seem that for fixed $\alpha>1$,
the target dimension $\tfrac{\log n}{\poly(\alpha)} = \Theta(\log n)$
offers only negligible improvement over the JL Lemma.
The crux is that many algorithms depend exponentially in the dimension,
in which case decreasing the dimension \emph{by a constant factor}
amounts to a \emph{polynomial improvement} in efficiency.
We indeed show such an application to streaming algorithms, as discussed next.

\subsection{Application: Dynamic Geometric Streams}\label{sec:dynamic_streams}

In dynamic geometric streams, a model
introduced by Indyk~\cite{10.1145/1007352.1007413_Indyk04},
the input $P$ is a set of points from $[\Delta]^d = \{1,2,...,\Delta\}^d$,
presented as a stream of point insertions and deletions. 
Algorithms in this model read the stream in one pass
and their space complexity (aka storage requirement) is limited.
We assume that $\Delta \leq \poly(n)$, which is common in this model. 
Ideally, algorithms for $k$-center should use at most $\poly(kd\log n)$ bits of space,
which is polynomial in the bit representation of a solution consisting of $k$ points in $[\Delta]^d$.
We focus throughout on the general case of high dimension (say $d\geq \log n$),
and mostly ignore algorithms 
whose space complexity is $\geq 2^d$ bits, which are suitable only for low dimension.

For $k$-center in insertion-only streams (i.e., without deletions), 
the tradeoff between approximation and space complexity is well understood,
and the regime of $O(1)$-approximation seems to be the most useful and interesting. 
\new{Indeed, there is an $O(1)$-approximation algorithm for discrete metrics that stores $O(k)$ points}
\cite{DBLP:journals/siamcomp/CharikarCFM04},
with extensions to the outliers variant~\cite{DBLP:conf/approx/McCutchenK08}
and also to the sliding-window model~\cite{CSS16}. 
\new{In Euclidean space, these algorithms yield $O(1)$-approximation using $O(kd\log n)$ bits of space.}
In contrast, all known $(1+\epsilon)$-approximation \new{algorithms in Euclidean space}
have space bound that grows like $(1/\epsilon)^d$~\cite{DBLP:journals/algorithmica/AgarwalP02,CeccarelloPP19,dBMZ21,BergBM23},
which is not sublinear in $n$ for high dimension. 
Furthermore, approximation below $\tfrac{1+\sqrt{2}}{2}\approx 1.207$ provably requires
$\Omega(\min\{n,\exp(d^{1/3})\})$ bits of space, even for $k=1$~\cite{AS15} (and is nearly matched by a $1.22$-approximation algorithm using $O(d\log n)$ bits~\cite{AS15,ChanP14,HMV25}). 
Another indication is a $(1.8+\epsilon)$-approximation for vanilla $k$-center using space complexity bigger by factor $k^{O(k)}$~\cite{DBLP:journals/comgeo/KimA15}.

The setting of dynamic streams (i.e., with deletions) seems much harder.
Here, $O(1)$-estimation using $\poly(kd \log n)$ bits of space is widely open, 
as we seem to lack effective algorithmic techniques. For example, in insertion-only streams,
$2$-approximation can be easily achieved for $k=1$, by just
storing the first point (in the stream) and then the furthest point from it.
It is unknown whether this algorithm extends to dynamic streams, 
because it relies on access to a point from $P$ right as the stream begins.
We provide an algorithm for dynamic streams and all $k$,
by simply applying our moderate dimension reduction result
and then using an algorithm of~\cite{BergBM23} for low dimension.

\begin{theorem}[Streaming Algorithm for $k$-Center]
\label{thm:streaming_informal}
There is a randomized algorithm that,
given $\alpha,d,k,n$ and a set 
    $P\subset \R^d$ of size at most $n$ presented as a stream of $\poly(n)$ insertions and deletions of points,
    returns an $O(\alpha)$-approximation to the $k$-center problem on $P$. 
    The algorithm uses 
    $n^{1/\alpha^2}\poly(kd \log n)$
    bits of space,
and extends to $k$-center variants as listed in~\Cref{table:dynamic_streams}.
\end{theorem}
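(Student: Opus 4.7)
My plan is to compose the moderate dimension reduction of \Cref{thm:dim_reduction_informal} with a known low-dimensional dynamic streaming algorithm, such as \cite{BergBM23}. Because $G\in\R^{t\times d}$ is a \emph{linear} map, it can be applied to the stream on the fly: every insertion (or deletion) of a point $p\in[\Delta]^d$ in the input stream induces an insertion (or deletion) of $Gp\in\R^{t}$ in an image stream of target dimension $t=O(\tfrac{\log n}{\alpha^2}+\log k)$, and any streaming algorithm for $k$-center in $\R^t$ can be fed this image stream in parallel. Since $G$ is oblivious, we can draw it (or a pseudorandom surrogate) at the start and commit to it, so that the high-probability guarantee of \Cref{thm:dim_reduction_informal} applies to the final pointset $P$ of size at most $n$.

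The next step is to make the image stream digestible by \cite{BergBM23}, which expects an integer grid $[\Delta']^t$. By the sub-Gaussian tail of $G$ and a union bound, with high probability every $Gp$ lies in a cube of side $L=\poly(n)$; after snapping each $Gp$ to the nearest point of a grid of side $\delta=1/\poly(n)$, the stream lives on an integer grid $[\Delta']^t$ with $\Delta'=L/\delta=\poly(n)$, and the $k$-center objective on the snapped version of $G(P)$ changes by at most $\delta\sqrt{t}=o(1)$, which is negligible compared to the rescaled optimum (the minimum nonzero pairwise distance in $G(P)$ is $\Omega(1)$ by the JL lower tail, since original pairwise distances are at least $1$). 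Feeding this grid-valued stream to \cite{BergBM23} with a constant approximation parameter yields an $O(1)$-approximation in $\R^t$ using space $2^{O(t)}\poly(t\log\Delta')=n^{O(1/\alpha^2)}\poly(kd\log n)=n^{1/\alpha^2}\poly(kd\log n)$, as claimed; combining with the $O(\alpha)$-estimation guarantee of \Cref{thm:dim_reduction_informal} finishes the estimation claim.

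To upgrade from $O(\alpha)$-estimation to $O(\alpha)$-approximation (outputting a solution $C\subset\R^d$), I would maintain alongside the sketch a small reservoir of input points, for instance via an $\ell_0$-sampling subroutine, from which to lift the low-dimensional center set back to $\R^d$; the ``$O(\alpha)$-approximation'' extension noted after \Cref{thm:dim_reduction_informal}, together with the fact that $G$ is one-to-one on $P$ with high probability, guarantees that the lifted solution is $O(\alpha)$-approximate on the original instance. The same template -- apply $G$, snap to a grid, invoke a low-dimensional streaming algorithm for the relevant variant, lift the solution -- yields the entries of \Cref{table:dynamic_streams} once we plug in the larger target dimensions of \Cref{table:dim_red_k_center_variants}.

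The main obstacle I expect is the interface step: controlling the snapping error uniformly across all $\poly(n)$ stream events (so that $\Delta'=\poly(n)$ suffices despite the possibility that a point inserted and then deleted has a large image norm), and adapting the grid argument to the outliers and assignment-constrained variants, where a single snapping could in principle flip a point across an outlier or constraint boundary. The slack afforded by constant-factor (rather than $(1+\epsilon)$) target approximation in \Cref{thm:dim_reduction_informal}, together with the slightly larger target dimension used for those variants, is precisely what absorbs these discretization issues.
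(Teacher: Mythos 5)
Your overall route is the paper's: apply the oblivious linear map $G$ update-by-update to convert the dynamic stream in $\R^d$ into a dynamic stream in $\R^t$ with $t=O(\tfrac{\log n}{\alpha^2}+\log k)$, run the low-dimensional algorithm of \cite{BergBM23} on the image stream, and transfer the answer back via the solution-preservation bullet of \Cref{thm:dim_reduction_informal}. The one step where your plan, as written, would fail is the lifting of centers. ``A small reservoir of input points'' maintained by a generic $\ell_0$-sampling subroutine does not suffice: the $k$ low-dimensional centers are specific grid cells of the \cite{BergBM23} structure, determined only at the end of the stream, and a small global sample of $P$ need not contain any point whose image lies in (or near) a particular chosen cell, so there may be nothing to lift and the error is uncontrolled. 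The paper's mechanism is specifically a \emph{two-level} $\ell_0$-sampler \cite[Lemma 3.3]{DBLP_arxiv:CJK+22} organized by the grid cells: every insertion/deletion of a cell carries, as a data field, the original point of $[\Delta]^d$ that generated it, so that for \emph{every} recovered non-empty cell one can extract an original preimage from that cell. Each low-dimensional center is then replaced by such a preimage (losing a factor $2$), and the second bullet of \Cref{thm:dimension_reduction_for_kcenter} (with centers restricted to $P$) yields an $O(\alpha)$-approximate solution in $\R^d$; the overhead is only a factor of $d$ in space, to store linear combinations of the data fields.

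A second, smaller issue: your justification that the snapping error is negligible because ``the minimum nonzero pairwise distance in $G(P)$ is $\Omega(1)$ by the JL lower tail'' is wrong in this regime. At target dimension $\tfrac{\log n}{\alpha^2}$ the lower tail cannot be union-bounded over all $\binom{n}{2}$ pairs, and some pairwise distances in $G(P)$ genuinely shrink by a factor $2^{\Theta(\alpha^2)}$ --- this is exactly why $(1+\epsilon)$ dimension reduction below $O(\log n)$ fails for diameter. The discretization is still harmless, but for a different reason: since $P\subset[\Delta]^d$, a nonzero $\OPT_{vanilla}(P)$ is at least $\tfrac12$, the theorem guarantees $\OPT_{vanilla}(G(P))$ is within an $O(\alpha)$ factor of it, and \cite{BergBM23} anyway discretizes at resolution $\Theta(\tfrac{\epsilon}{\sqrt t})$ times a guessed value of $\OPT$, enumerating $O(\log (n\Delta))$ scales, which absorbs the interface concern you raise. (Also, storing $G$ explicitly costs $t\cdot d\cdot O(\log n)$ bits, well within the space budget, so no pseudorandom surrogate is needed.) With the lifting step replaced by the per-cell two-level sampler, the rest of your argument, including the extension to the variants of \Cref{table:dynamic_streams} with their larger target dimensions, matches the paper.
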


\begin{remark}
  This theorem can also achieve $2^{d/\alpha^2} \poly(kd\log n)$ bits of space,
  which is only stronger; indeed,
  one can assume that $d=O(\log n)$ by the JL Lemma. 
  By setting $\alpha$ appropriately,
  this algorithm achieves $O(\sqrt{d/\log (kd\log n)})$-estimation
  using the ideal space bound of $\poly(kd\log n)$ bits.
\end{remark}

\begin{table*}[!t]
\begin{center}
  \caption{Space complexity upper bounds for $O(\alpha)$-estimation in dynamic streams of $k$-center variants,
    listed separately as a function of $d$ and of $n$, 
    but omitting $\poly(kd\log n)$ factors.
    The results of~\cite{BergBM23} actually achieve $(1+\epsilon)$-estimation.
    \label{table:dynamic_streams} 
} 
\begin{tabular}{|c|c|c|c|}
\hline
$k$-center variant & \multicolumn{2}{c|}{dynamic streaming space upper bounds} & reference \\ 
\hline
\hline
vanilla  & $2^{d}$  & - & \cite{BergBM23} \\
    & - & $n^{1/\alpha^2}$, only for $k=1$ & \cite{DBLP:conf/soda/Indyk03} \\
    & $2^{d/\alpha}$ & $n^{1/\alpha}$ & derived from \cite{DBLP_arxiv:CJK+22} \\
   & $2^{d/\alpha^2}$ & $n^{1/\alpha^2}$
& \Cref{thm:streaming_vanilla} \\
   \hline
with $z$ outliers & $2^{d} + z$ & - & \cite{BergBM23} \\
     & $2^{d/\alpha^2}\poly(z)$ & $n^{1/\alpha^2}\poly(z)$
& \Cref{thm:streaming_outliers} \\
     \hline
capacitated  & $2^{d}$  & - & \cite{BergBM23}\\
    & $2^{d/\alpha}$ & $n^{1/\alpha}$
& 
        \Cref{thm:streaming_capacitated} 
    
    \\
    \hline
fair  & $2^{d}$  & - & \cite{BergBM23}\\
    & $2^{d/\alpha}$ & $n^{1/\alpha}$ 
& 
        \Cref{thm:streaming_fairness} 
    
    \\
\hline
\end{tabular}
\end{center}
\end{table*}

We do not know whether our bounds are tight, and in fact,
proving lower bounds for dynamic geometric streams is a challenging open problem
(apparently, even for deterministic algorithms).
One would expect these lower bounds to exceed those known for insertion-only streams,
however, current techniques seem unable to exploit deletions.

Our result improves and significantly generalizes the known bounds for $k$-center in dynamic streams.
Currently, the best algorithm that can handle deletions
works only for the special case of $k=1$ vanilla variant, 
and achieves $O(\alpha)$-estimation using $\tO(n^{1/\alpha^2}d)$ bits of space\footnote{Throughout, the notation $\tO(f)$ hides $\poly(\log n)$ factors.}
(i.e., same bound as in \Cref{thm:streaming_informal}, but only for this special case).
This result follows by adapting a dynamic algorithm from~\cite{DBLP:conf/soda/Indyk03}
to the streaming setting. 
Known algorithms for insertion-only streams are largely not relevant,
as for this problem they rarely extend to handle deletions. 
Perhaps the only exception is a simple approach based on
moving the data points to a grid of a certain granularity
(see e.g.~\cite{DBLP:journals/algorithmica/AgarwalP02}),
that has been extended to handle deletions by employing sparse-recovery techniques~\cite{BergBM23}.
It was further extended to the outliers variant~\cite{BergBM23}, 
and it extends to the capacitated and fair variants as well, see \Cref{sec:streaming_low_dim} for a brief discussion.
This algorithm achieves $(1+\epsilon)$-approximation
and uses space complexity that is bigger by factor $(O(\frac{1}{\epsilon}))^d$,
which, as mentioned earlier, is not sublinear in $n$ for high dimension. 
Another possible approach is to employ a recent technique
called consistent hashing~\cite{DBLP_arxiv:CJK+22},
to obtain (quite easily, details omitted)
a streaming algorithm for vanilla $k$-center
that uses $2^{d/\alpha}\poly(kd \log n)$ bits of space. 
The dependence on $\alpha$ here is inferior to \Cref{thm:streaming_informal},
and is known to be optimal for consistent hashing~\cite{DBLP_arxiv:CJK+22}. 
We remark that the tree-embedding technique of \cite{10.1145/1007352.1007413_Indyk04},
which has been useful for several dynamic streaming problems in high dimension,
is ineffective for $k$-center, or even the diameter problem, 
as it bounds only the \emph{expected stretch} of every pair of points.

\new{
For dynamic streams in high dimension, while there is no prior work on $k$-center (except for $k=1$), there is work on other problems.
Algorithms achieving $(1 + \epsilon)$-approximations
using ``ideal'' space $\poly(\epsilon^{-1}kd \log \Delta)$
are known for clustering problems, like $k$-means \cite{hu2018nearly}
and $k$-median~\cite{DBLP:conf/icml/BravermanFLSY17},
and more recently for Max-Cut~\cite{DBLP:conf/stoc/ChenJK23}.\footnote{\new{For Max-Cut, which does not have parameter $k$, the ``ideal'' space complexity is $\poly(\epsilon^{-1}d \log \Delta)$.}}
For many other problems,
existing algorithms provide worse approximations,
like $O(d\log \Delta)$ or $O(d)$ 
\cite{10.1145/1007352.1007413_Indyk04, AIK08, DBLP:conf/focs/AndoniBIW09, DBLP:conf/stoc/ChenJLW22, WY22, DBLP_arxiv:CJK+22, DBLP:conf/stoc/ChenCJLW23}.
}

In fact, recent research on streaming algorithms
has uncovered an intriguing tradeoff between approximation and space complexity 
for different geometric problems.
Interestingly, the tradeoff we obtain for vanilla $k$-center
is better than the one known for these other problems.
For earth mover's distance (EMD) 
in the plane (i.e., $\R^2$), we know of $O(\alpha)$-estimation
using $\tO(n^{1/\alpha})$ bits of space~\cite{DBLP:conf/focs/AndoniBIW09}.\footnote{This result is in fact in terms of $\Delta$, but our assumption $\Delta \leq \poly(n)$ implies $\Delta^{O(1/\alpha)}=n^{O(1/\alpha)}$.}
For minimum spanning tree (MST), we know of $O(\alpha)$-estimation
using $n^{\sqrt{\nicefrac{\log \alpha}{\alpha}}}\poly(d\log n)$ bits of space~\cite{DBLP:conf/stoc/ChenCJLW23}.
For facility location, we know of $O(\alpha)$-estimation
using $n^{1/\alpha}\poly(d\log n)$ bits of space~\cite{DBLP_arxiv:CJK+22}.
We currently have no satisfactory explanation for these gaps, 
but since these four results rely on rather different methods, 
developing unified techniques (possibly via dimension reduction) may potentially improve some of these bounds.

\subsection{Extension: Inputs of Small Doubling Dimension}

When the input $P\subset\R^d$ has small doubling dimension,
one can achieve even better dimension reduction than \Cref{thm:dim_reduction_informal},
eliminating the dependence (of $t$) on $d$ and $n$.
Following~\cite{GKL03} (see also~\cite{Clarkson99}), 
the \emph{doubling dimension} of a set $P\subset\R^d$, denoted $\ddim(P)$,
is the smallest number such that every ball (in $P$)
can be covered by $2^{\ddim(P)}$ balls of half the radius.
Dimension reduction for inputs of small doubling dimension
has been studied before for three problems: 
For facility location, one can achieve $O(1)$-estimation 
using target dimension $t=O(\ddim(P))$~\cite{DBLP:conf/icml/NarayananSIZ21}.
For Nearest-Neighbour Search (NNS), one can obtain $(1+\epsilon)$-approximation
using $t=O(\tfrac{\log (1/\epsilon)}{\epsilon^2} \ddim(P))$~\cite{DBLP:journals/talg/IndykN07},
and for minimum spanning tree (MST), one can achieve $(1+\epsilon)$-estimation
using a similar target dimension albeit with another additive term of
$O(\tfrac{\log\log n}{\epsilon^2})$~\cite{DBLP:conf/icml/NarayananSIZ21}.
\new{Following our work, new bounds for these and other problems were obtained in~\cite{GJKSSSW25,HJKY25,JKSSY26}.}
The following theorem, which we prove in 
    \Cref{sec:doubling},
shows an analogous result for $(1+\epsilon)$-estimation of $k$-center.

\begin{theorem}[Dimension Reduction for Doubling Sets; abridged version of \Cref{thm:doubling_diameter}]
\label{thm:doubling_informal}
For every $\epsilon,d,k$ and $\ddim$, 
suppose $G$ is as in \Cref{sec:main_results}
with target dimension $t=O(\tfrac{\log(1/\epsilon)}{\epsilon^2}\ddim + \tfrac{\log k}{\epsilon^2})$.
Then, for every set $P\subset \R^d$ whose doubling dimension is at most $\ddim$, with high probability,
$G$ preserves the $k$-center value of $P$ within $1+\epsilon$ factor.\footnote{The $k$-center value here refers to $C\subset \R^d$,
  i.e., centers from the ambient space. 
  The theorem extends to preserving solutions,
  albeit with the restriction $C\subseteq P$,
  which introduces a factor of $2$ in the approximation. 
}
This result extends to the $k$-center variants listed in~\Cref{table:dim_red_k_center_variants}.
\end{theorem}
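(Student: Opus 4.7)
The plan is to combine a Johnson--Lindenstrauss (JL) union bound on a coarse net of $P$ with a multi-scale chaining argument that transfers the guarantee to non-net points, exploiting the fact that a doubling set covered by $k$ balls of radius $r^\ast$ admits a net of size only $k\cdot(O(1/\epsilon))^{\ddim}$. Normalize so that the optimal value is $r^\ast=1$ and fix an optimal center set $C^\ast=\{c^\ast_1,\ldots,c^\ast_k\}$ together with the induced partition $P=P_1\cup\cdots\cup P_k$ with $P_i\subseteq B(c^\ast_i,1)$. Each $P_i$ admits an $\epsilon$-net of size $(O(1/\epsilon))^{\ddim}$, so their union $N$ is an $\epsilon$-net of $P$ with $|N|\le k\cdot(O(1/\epsilon))^{\ddim}$. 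Applying $G$ with target dimension $t=O((\log k+\ddim\log(1/\epsilon))/\epsilon^2)$ and union-bounding over all pairs in $N\cup C^\ast$ using the sub-Gaussian tail of $G$, all their pairwise distances are preserved within factor $1+\epsilon$ with high probability; the lower-bound direction is then immediate, since $N\subseteq P$ gives $\mathsf{OPT}(G(P))\ge\mathsf{OPT}(G(N))\ge(1-\epsilon)\mathsf{OPT}(N)\ge(1-O(\epsilon))r^\ast$, where the last step uses that an $\epsilon$-net inherits the $k$-center value up to additive $\epsilon$.

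For the upper-bound direction $\mathsf{OPT}(G(P))\le(1+O(\epsilon))r^\ast$, I would use $G(C^\ast)$ as the center set for $G(P)$. Writing $\|Gp-Gc^\ast_p\|\le\|G(n_p-c^\ast_p)\|+\|G(p-n_p)\|$, where $n_p\in N$ is $p$'s closest net point, the first summand is at most $(1+\epsilon)^2$ by the JL bound on $N\cup C^\ast$, so the theorem reduces to showing
\[
  \max_{p\in P}\|G(p-n_p)\|\le O(\epsilon).
\]
I would attack this by multi-scale hierarchical chaining: build nested nets $N=N^{(0)}\subseteq N^{(1)}\subseteq\cdots\subseteq N^{(L)}$ where $N^{(j)}$ is an $\epsilon 2^{-j}$-net of $P$ with $|N^{(j)}|\le k\cdot(2^j/\epsilon)^{O(\ddim)}$, and for each $p$ let $n_j(p)\in N^{(j)}$ be its closest point. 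The telescoping identity $p-n_p=(p-n_L(p))+\sum_{j=0}^{L-1}(n_{j+1}(p)-n_j(p))$ splits the residual into one term per scale. At each scale $j$, the increment $n_{j+1}(p)-n_j(p)$ belongs to a pre-determined set of size at most $|N^{(j+1)}|^2$ with norm $O(\epsilon 2^{-j})$, so the sub-Gaussian tail gives $\|G(n_{j+1}(p)-n_j(p))\|\le O(\epsilon 2^{-j})$ with failure probability $\exp(-\Omega(\epsilon^2 t))$, and a union bound over all scales $j\le L$ and all such increments is feasible precisely because our choice of $t$ dominates the logarithm of their total count. Summing telescopically yields an $O(\epsilon)$ bound for the chained portion, uniformly over $p\in P$.

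The main obstacle is the final residual $\|G(p-n_L(p))\|$, because the set $\{p-n_L(p):p\in P\}$ can contain up to $n$ distinct vectors and a naive union bound would reintroduce a $\log n$ term into $t$. My plan is to choose $L$ large enough that $\epsilon 2^{-L}\cdot\|G\|_{\mathrm{op}}=O(\epsilon)$ and bound this summand via the high-probability operator-norm estimate $\|G\|_{\mathrm{op}}=O(1+\sqrt{d/t})$ that the Gaussian-style maps of \Cref{sec:main_results} satisfy. Since this requires $d$ to be moderate, I would first invoke a coarse JL preprocessing to reduce the ambient dimension to $O(\log n)$, analogously to the remark following \Cref{thm:dim_reduction_informal}; the extra $O(\ddim\log\log n)$ overhead from enlarging $L$ is then absorbed into the $\ddim\log(1/\epsilon)$ term of $t$ in any regime of interest. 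The extensions to the $k$-center variants listed in \Cref{table:dim_red_k_center_variants} follow the same template as in the proof of \Cref{thm:dim_reduction_informal}: the chaining argument is agnostic to the specific objective and uses only that the optimal solution decomposes $P$ into $k$ pieces of radius at most $r^\ast$, while the ``for all'' guarantee for assignment-constrained variants comes for free because the event being union-bounded depends only on $P$ and the net structure, not on the constraint.
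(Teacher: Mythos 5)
Your lower-bound direction contains the central gap. You assert that JL on the net gives $\OPT(G(N))\geq(1-\epsilon)\OPT(N)$ "immediately", but preserving the pairwise distances of $N$ does not by itself control the $k$-center value of $G(N)$ when centers range over the ambient space $\R^t$: the optimal centers for $G(N)$ are arbitrary points of $\R^t$, with no controlled preimage near $P$, so there is no one-line way to pull the image solution back and conclude that the value cannot shrink. This is exactly the step the paper handles with the Kirszbraun extension theorem (\Cref{thm:kirszbraun}): the inverse map $Gy\mapsto y$ on $G(Y)$ is $(1+\epsilon)$-Lipschitz by the JL event, its Kirszbraun extension is applied to the optimal image centers $\tilde C\subset\R^t$, and the pulled-back centers are then shown to serve $P$ up to $(1+O(\epsilon))$. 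Your claim is in fact true, but proving it requires either this extension argument or the convex-duality identity $r^2(S)=\max_{\lambda\in\Delta}\tfrac12\sum_{i,j}\lambda_i\lambda_j\|s_i-s_j\|^2$ for the enclosing-ball radius applied cluster-by-cluster; as written, your proof supplies neither, and the naive "distances preserved hence value preserved" reasoning is precisely what fails for ambient centers (it is the reason the $k=1$ case, minimum enclosing ball, is delicate at all).

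The upper-bound direction is essentially sound in spirit but, as designed, proves a weaker dimension bound than claimed. With a constant allowed expansion per scale, the union bound at scale $j$ over the $k^2(2^j/\epsilon)^{O(\ddim)}$ increments forces $t\gtrsim\ddim\cdot j$, and your termination via the operator-norm bound (after a preliminary JL to $d=O(\log n)$) forces $L=\Theta(\log\log n)$; the resulting requirement $t=O\big(\epsilon^{-2}(\ddim\log(1/\epsilon)+\log k)+\ddim\log\log n\big)$ is not the $n$-independent bound in the statement, and "absorbed in any regime of interest" is not a proof (eliminating the dependence on $n$ and $d$ is the point of the theorem). The standard fix is to let the allowed distortion grow with the scale, e.g.\ $r_j\asymp\sqrt{j\,\ddim/t}$, so that every scale survives the union bound and the telescoping sum still totals $O(\epsilon)$ with no residual term; this is exactly the content of the Indyk--Naor bound (\Cref{lem:IN07_doubling_radius}) that the paper invokes as a black box in \Cref{thm:doubling_diameter}, so your chaining amounts to re-deriving it and should be repaired accordingly rather than patched with the operator norm. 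Your treatment of the variants is also too quick: for the assignment-constrained variant the constraint involves all $n$ points, so one cannot simply restrict to the net on the lower-bound side; the paper uses a movement-based argument there.
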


Observe that by composing the maps in \Cref{thm:dim_reduction_informal,thm:doubling_informal},
we can achieve $O(\alpha)$-approximation of $k$-center
when reducing to target dimension $t=O(\tfrac{\ddim(P)}{\alpha^2} + \log k)$.
This bound is better than the one in \Cref{thm:dim_reduction_informal}
since always $\ddim(P)=O(\min(d,\log n))$. 
Consequently,
the space of the streaming algorithm in \Cref{thm:streaming_informal}
improves to $2^{\ddim(P)/\alpha^2}\poly(kd\log n)$ bits.
The result extends to the $k$-center variants
(with outliers and with assignment constraint) in a natural way.
We can replace $d$ with $\ddim(P)$ also in \Cref{table:dim_red_k_center_variants,table:dynamic_streams}.
\new{The result for the assignment constraint variant holds ``for all'' constraints simultaneously with high probability, as we obtain in the analogous of \Cref{thm:dim_reduction_informal}.}

\section{Technical Overview}
\label{sec:technical_overview}
\label{SEC:TECHNICAL_OVERVIEW} \new{
In this section, we overview the proof of \Cref{thm:dim_reduction_informal},
including its extensions to the outliers and assignment constraint variants,
briefly overview the proofs of \Cref{thm:streaming_informal,thm:doubling_informal}, and additionally provide in \Cref{sec:all_pairs_moderate_dim} an $O(\tfrac{\log n}{\log \alpha})$ target dimension bound for $\alpha$-approximating all pairwise distances.
We focus on dimension reduction that maps
from dimension $d$ to $t=O(\frac{d}{\alpha^2}+\log k)$,
which implies \Cref{thm:dim_reduction_informal},
because we can effectively assume $d=O(\log n)$.
Indeed, if $d$ is larger then we first reduce it to $O(\log n)$
using \Cref{thm:doubling_informal} with $\eps=1/2$
and the bound $\ddim(P)\leq \log n$.
An alternative (and simpler) way to reduce a larger dimension $d$ to $O(\log n)$
is to use the JL Lemma and an extension theorem.
This is presented in \Cref{sec:JL_preserves_constrained_kcenter},
which furthermore shows that mapping to dimension $t=O(\eps^{-2}\log n)$ preserves all solutions, for all considered variants, up to $1+\eps$ factor.
(This bound has a slightly better dependence on $\eps$ compared to \Cref{thm:doubling_informal}, basically removing a $\log\tfrac{1}{\eps}$ factor.)
}

Our dimension-reduction map is defined via a matrix $G\in\R^{t\times d}$ of iid Gaussians $N(0,\tfrac{1}{t})$,
which is known to be a JL map~\cite{DBLP:conf/stoc/IndykM98,DBLP:journals/rsa/DasguptaG03}.
\new{Recall that a Gaussian $N(0,\tfrac{1}{t})$ is distributed like a Gaussian $N(0,1)$ scaled down by $\sqrt{t}$.}
\new{We remark that there are also other random linear maps that satisfy the JL Lemma, 
and our results may possibly be extended to these maps. 
These include projection to a random subspace~\cite{article_JohnsonLindenstrauss,FM90}, a matrix of iid Rademacher random variables~\cite{DBLP:journals/jcss/Achlioptas03} or 
independent random variables with a sub-Gaussian tail (which includes both Gaussians and Rademacher random variables)~\cite{DBLP:journals/rsa/Matousek08,DBLP:journals/talg/IndykN07,KLARTAG2005229}.
Moreover, there is a long line of work on maps with improved running time, e.g., the two cornerstone results known as fast JL~\cite{DBLP:conf/stoc/AilonC06} and sparse JL~\cite{DBLP:conf/stoc/DasguptaKS10}.}

\begin{lemma}[JL Lemma \cite{DBLP:conf/stoc/IndykM98,DBLP:journals/rsa/DasguptaG03}]
  \label{fact:JL_lemma_one_vector}
  Let $G$ be a $t\times d$ matrix of iid Gaussians $N(0,\tfrac{1}{t})$. Then
  \[
    \forall x\in \R^d \enspace \text{and} \enspace \epsilon>0,
    \qquad
    \Pr\big[ \|Gx\|\notin (1\pm \epsilon)\|x\| \big]
    \leq 2^{-\Omega(\epsilon^2 t)}.
  \]
  When $t\geq c_0 \eps^{-2}\log n$ for an absolute constant $c_0>0$, the right hand side is $\leq n^{-10}$. Therefore,
  for each set $P\subset\R^d$ of size $|P|=n$, by a union bound over pairs of points in $P$, with high probability,
  \[
  \forall x,y\in P, \quad \|Gx-Gy\|\in (1\pm\eps)\|x-y\|.
  \]
\end{lemma}
In addition to the above JL Lemma about distortion of distances,
we often use that a matrix of iid Gaussians satisfies
the following sub-Gaussian tail. 
\begin{definition}
    We say that a random map $f:\R^d \to \R^t$ has \emph{sub-Gaussian tail} if
    \begin{equation} \label{eq:subGaussian_tail}
      \forall x\in\R^d, r>0,
      \qquad
      \Pr_f \big[ \|f(x)\| \geq (1+r)\|x\| \big] \leq e^{-\Omega(r^2 t)}.
    \end{equation}
\end{definition}
This tail bound was key to prior work in this context,
and it holds for a $t\times d$ matrix of iid Gaussians $N(0,\tfrac{1}{t})$~\cite{DBLP:conf/stoc/MakarychevMR19, DBLP:conf/stoc/ChenJK23}.
The next technical lemma is key to our proof,
and shows that for a $t\times d$ matrix of iid Gaussians $N(0,1)$
(not normalized by $\tfrac{1}{\sqrt{t}}$),
with high probability, the largest singular value is $O(\sqrt{d})$.

\begin{lemma}\label{lem:ball_expansion_gaussians}
Let $t<d$ and suppose $\tilde{G}\in\R^{t\times d}$ is a matrix of iid Gaussians $N(0,1)$. 
Then for a suitable absolute constant $c_0>0$,
\[
  \Pr\big[ \sup_{\|x\| \leq 1} \|\tilde{G}x\| > c_0\sqrt{d} \big]
  \leq  2^{-\Omega(d)}.  
\]
\end{lemma}

This lemma can be derived from \cite[Theorem 4.6.1]{vershynin_2018},
which bounds by $O(\sqrt{d})$ the largest singular value
of a $d\times d$ matrix with independent sub-Gaussians entries,
and arguing that removing rows cannot increase the largest singular value.
For completeness, we provide in \Cref{sec:dimension_reduction_for_kcenter} another proof,
which may possibly be extended to other JL maps.

\subparagraph{Warm Up: the Furthest Point Query Problem.}
We start with moderate dimension reduction for the furthest point query (FPQ) problem,
which may be of independent interest.
In this problem, the input is a \emph{data set} $P\subset \R^d$ of size $|P|=n$ 
and a \emph{query set} $Q\subset \R^d$ of size $|Q|\leq k$,
and the goal is to report a point from $P$ that is furthest from the set $Q$. 
Let $FPQ_k(P,Q)$ denote this optimal value (distance from $Q$). 
As we show further below,
one can use this problem to achieve $2$-approximation for vanilla $k$-center,
by simply employing the famous Gonzalez's algorithm~\cite{DBLP:journals/tcs/Gonzalez85},
which essentially solves $k$ instances of FPQ. 
This FPQ problem admits the same dimension reduction as in \Cref{thm:dim_reduction_informal},
with a slightly simpler proof than for vanilla $k$-center, and thus serves as a good warm up.
The theorem and proof are provided rigorously in 
    \Cref{sec:fpq_appendix}.

\new{
Let $G\in\R^{t\times d}$ be a matrix of iid Gaussians $N(0,\tfrac1t)$,
where $t=\tfrac{d}{\alpha^2} + c_1\log k$ for an absolute constant $c_1>0$.
To show that $G$ preserves the $FPQ_k$ value of $(P,Q)$ within $O(\alpha)$ factor,
it suffices to prove that: 
\begin{enumerate}[label=(\roman*)]
    \item $FPQ_k(G(P),G(Q))\leq O(\alpha)FPQ_k(P,Q)$, and
    \item $FPQ_k(G(P),G(Q))\geq \Omega(FPQ_k(P,Q))$.
\end{enumerate}
To prove (i), observe that 
by \Cref{lem:ball_expansion_gaussians}, 
with high probability,
our map expands all vectors at most by factor $O(\sqrt{d/t})= O(\alpha)$,
thus the value of $FPQ_k(P,Q)$
increases at most by this factor. 
To prove (ii), consider a point $p^*\in P$ that is furthest from $Q$.
Now apply the JL Lemma with $\epsilon=\tfrac{1}{2}$ on $Q\cup\{p^*\}$,
whose size is at most $k+1$ 
and get that with high probability,
\[
  FPQ_k(G(P),G(Q))\geq\dist(Gp^*,G(Q))\geq \tfrac{1}{2}\dist(p^*,Q)=\tfrac{1}{2}FPQ_k(P,Q).
\] 
This concludes the proof for $FPQ$.
}

\subparagraph{Framework for Problems with Small Witness.}
An immediate corollary of \Cref{lem:ball_expansion_gaussians} is that for every $k$-center variant, if one uses target dimension $t\geq \tfrac{d}{\alpha^2}$, then the optimal value increases at most by factor $O(\alpha)$.
(Obviously, this fact may be useful for many other geometric problems.)
We denote by $\OPT(X)$ the optimal value of the problem at hand (e.g., vanilla or outliers) for a set $X$.

It remains to prove that the optimal value does not decrease much,
and for this we devise the following approach:
prove the existence of a small subset $P'\subseteq P$ (say, of size $O(k)$),
that we shall call a \emph{witness}, which satisfies $\OPT(P') = \Omega(\OPT(P))$,
and then apply the JL Lemma on this set, say, with $\epsilon=\tfrac{1}{2}$,
to get that with high probability, 
$G$ decreases all pairwise distances in $P'$ by at most factor $2$.
We then immediately get
(by restricting the centers to the dataset, which loses another factor $2$), 
\[
\OPT(G(P)) \geq \OPT(G(P')) 
\geq \tfrac{1}{4} \OPT(P') = \Omega(\OPT(P)),
\]
which concludes the proof.
We apply this witness-based approach below,
viewing it as a framework that may find additional uses in the future.
Our notion of ``witness'' is somewhat analogous to a coreset:
both notions preserve the cost in a certain way, and both have a small size.
Charikar and Waingarten~\cite{DBLP:journals/corr/CharikarWaingarten23}
relied on coresets in an analogous (but technically different) argument
to prove dimension reduction results for other clustering problems.

\subparagraph*{Witness for Vanilla $k$-Center.}
Consider running Gonzalez's algorithm (aka furthest-first traversal)~\cite{DBLP:journals/tcs/Gonzalez85},
which is the following iterative algorithm. 
Iteratively construct a set $P'\subseteq P$, initialized to contain one arbitrary point from $P$,
and then while $|P'|< k+1$, find a solution for $FPQ_k(P,P')$
(i.e., a point furthest from the current $P'$)
and add it to $P'$. \new{Eventually, $P'$ has size $k+1$.}
It is well known that the distance between the last point added to $P'$
and the earlier points is in the range $[\OPT_{vanilla}(P),2\OPT_{vanilla}(P)]$,
which implies that $\OPT_{vanilla}(P')\geq \tfrac{1}{2}\OPT_{vanilla}(P)$.
We use this set $P'$ as a witness in our result for vanilla $k$-center, see \Cref{sec:rand_proj_FPQ}.

\subparagraph*{Witness for $k$-Center with $z$ Outliers.}

For this variant, there is a witness of size $O(kz)$.
It follows from a ``peeling'' algorithm --- execute Gonzalez's algorithm $z+1$ times, and after each execution, delete from $P$ the $k+1$ points found in that execution.
This algorithm (and its proof of correctness) was already used in \cite{AbbarAIMahabadiV13} in the context of robust coresets, and is itself based on \cite{AgarwalHY08_peeling}.
We use this witness in our result for $k$-center with $z$ outliers, see \Cref{sec:outliers}.

\subparagraph*{Variant with an Assignment Constraint.}
We do not present a witness for this variant,
but rather bound the decrease in value via a different method.
Denote by $\OPT_{\AC}(\cdot)$ the value of $k$-center with an assignment constraint $\AC$.
Our proof in \Cref{sec:dim_red_assignment_constraint} compares $\OPT_{\AC}$ to $\OPT_{vanilla}$ (on the same input $P$) --- if these values are close,
then the proof follows from the fact that $\OPT_{vanilla}$ is preserved. 
Otherwise, $\OPT_{\AC}$ is significantly larger than $\OPT_{vanilla}$, and for the sake of analysis, we ``move'' every data point to its nearest ``vanilla center'', and get a weighted set of only $k$ points, whose total weight is $n$.
The crux of the proof shows that under the random linear map,
moving points of $P$ corresponds to moving points of $G(P)$,
which in turn does not change $\OPT_{\AC}(G(P))$ by too much,
essentially because by \Cref{lem:ball_expansion_gaussians},
the map $G$ keeps every point close to its nearest vanilla center.

\subparagraph*{Streaming Algorithm.}\label{par:technical_overview_streaming}
Our streaming algorithm in \Cref{thm:streaming_informal} is a corollary of the dimension reduction.
First apply the dimension reduction $G:\R^d\to \R^t$ of \Cref{thm:dim_reduction_informal} on every point in the input stream,
and then employ a known algorithm from~\cite{BergBM23}
whose space is exponential in the reduced dimension $t$,
namely, $k (O(\tfrac{1}{\epsilon}))^{t}\poly(\log n)$ bits.
In a nutshell, their algorithm (when applied in $\R^t$)
``moves'' every input point to its nearest grid point
in an $(\tfrac{\epsilon}{\sqrt{t}} \OPT_{vanilla})$-grid,
and uses sparse recovery to find all the non-empty grid points
(viewed as buckets of input points), which they call a relaxed coreset. 
Moving the points can clearly change the optimal value
additively by at most $\epsilon\cdot \OPT_{vanilla}$,
and the number of grid points within a ball of radius $\OPT_{vanilla}$
is bounded by roughly $(O(\tfrac{1}{\epsilon}))^t$,
leading to the above space requirement.\footnote{The proof in~\cite{BergBM23} has an extra factor of $\sqrt{t}^t$ in the space bound.
  This extra factor becomes $O(1)$ in their setting of a fixed dimension $t=O(1)$,
  but it is unnecessary in general by a volume argument that compares a Euclidean ball of radius $\OPT_{vanilla}$ to a cube of sidelength $\tfrac{\epsilon}{\sqrt{t}}\OPT_{vanilla}$. }

\new{In \Cref{sec:streaming_low_dim},} we adapt this algorithm to report a solution,
namely, centers from $\R^d$ (actually selected from the input $P$).
Observe that one cannot just map the centers found in low dimension $\R^t$
to points in high dimension $\R^d$ because there is no inverse map to $G$. 
Instead, we use a two-level $\ell_0$-sampler~\cite[Lemma 3.3]{DBLP_arxiv:CJK+22},
which can be viewed as a more sophisticated version of sparse recovery ---
add to each insertion/deletion of a grid point in $\R^t$
a ``data field'' containing the original input point in $\R^d$.
Now a two-level sampler will pick a random non-empty grid-point (bucket)
and then a random element from that bucket,
and it will report also the data field, revealing an original input point,
which can be used as a center point (up to factor $2$ in the approximation). 
This two-level sampler has space requirement bigger by only factor $d$,
essentially to store linear combinations of such data fields.
This method extends to the variants listed in \Cref{table:dynamic_streams},
where throughout (i.e., for all variants)
a reported solution refers to the set of $k$ centers.

\paragraph{Dimension Reduction for Sets of Low Doubling Dimension.} 
\new{
The proof of 
\Cref{thm:doubling_informal}, which is provided in \Cref{sec:doubling}, combines several established techniques, as follows.
By a standard net argument (see, e.g.,~\cite{GKL03}), there exists a subset $Y \subseteq P$ of size at most $k(2/\epsilon)^{\mathrm{ddim}(P)}$ such that $\mathrm{dist}(p, Y) \le \epsilon \cdot \mathrm{opt}_{\mathrm{vanilla}}(P)$ for all $p \in P$. We then apply the Johnson-Lindenstrauss (JL) Lemma to $Y$. Furthermore, by a result of Indyk and Naor~\cite{DBLP:journals/talg/IndykN07} regarding dimension reduction for sets with low doubling dimension, every point within an $(\epsilon \cdot \mathrm{opt}_{\mathrm{vanilla}}(P))$-ball centered at a point $y \in Y$ maps to a point within a $(6\epsilon \cdot \mathrm{opt}_{\mathrm{vanilla}}(P))$-ball around $Gy$. These two events suffice to guarantee that the dimension reduction holds simultaneously for all assignment constraints, as follows.
}

\new{
Consider an arbitrary constraint $\mathcal{C}$, and let us show that
(a) $\mathrm{opt}_{\mathcal{C}}(G(P)) \le (1 + O(\epsilon)) \mathrm{opt}_{\mathcal{C}}(P)$ and
(b) $\mathrm{opt}_{\mathcal{C}}(G(P)) \ge (1 - O(\epsilon)) \mathrm{opt}_{\mathcal{C}}(P)$;
the final guarantee then follows by rescaling $\epsilon$. 
To establish the upper bound (a), let $(C^*, \pi^*)$ be an optimal solution for $P$, where $C^* \subset \mathbb{R}^d$ is the set of centers and $\pi^*$ is a feasible partition with respect to $\mathcal{C}$.
The centers in $C^*$ need not belong to the dataset $P$,
but we can use an extension theorem of Kirszbraun~\cite{Kir34},
to extend the aforementioned JL map on $Y$
to a map $\phi$ over the larger domain $Y\cup C^*$, 
while maintaining a JL-like guarantee on this larger domain.
We then upper bound the clustering cost of $G(P)$ using the solution $(\phi(C^*),\pi^*)$ and a movement-based argument:
we ``move'' each point $p \in P$ to its nearest representative $y_p \in Y$,
apply the extended JL guarantee of $\phi$ on $Y \cup C^*$,
and then ``move'' the point back from $G(y_p)$ to $G(p)$.
The crux is that each movement incurs an additive cost of at most $6\epsilon \cdot \mathrm{opt}_{\mathrm{vanilla}}(P)$.}

\new{
The lower bound (b) follows by a similar argument, effectively reversing the roles of $P$ and $G(P)$.
Consider an optimal solution for $G(P)$, consisting of a set of centers $\tilde{C} \subset \mathbb{R}^t$ and a feasible partition $\tilde{\pi}$.
Since $G : Y \to G(Y)$ is a JL map,
its inverse map $G^{-1} : G(Y) \to Y$ approximately preserves distances,
and can be extended to the larger domain $G(Y) \cup \tilde{C}$.
Finally, use a similar movement-based argument to bound the cost of $P$ by
$\mathrm{opt}_{\mathcal{C}}(P) \le (1 + O(\epsilon)) \mathrm{opt}_{\mathcal{C}}(G(P))$,
which concludes the proof.
}

\new{
\subsection{$\alpha$-Approximation for All Pairwise Distances}\label{sec:all_pairs_moderate_dim}

We now provide a short proof that target dimension $t=O(\frac{\log n}{\log \alpha})$
suffices to preserve all the pairwise distances within factor $\alpha$. 

\begin{observation}[$\alpha$-approximation for all pairwise distances]\label{thm:all_pairs_moderate_dim}
Let $d,n$ and $\alpha>2$. There is a random linear map
$G : \mathbb R^d \to \mathbb R^t$ with
   $ t = O(\tfrac{\log n}{\log\alpha})$,
such that for every set $P\subset \mathbb R^d$ of size $n$, with probability at least $2/3$,
\[
    \forall p_1,p_2\in P,
    \qquad
    \|p_1-p_2\|\leq \|Gp_1-Gp_2\|
    \leq
        \alpha\|p_1-p_2\|.
\]
\end{observation}

The proof uses the following fact.
\begin{fact}[Equation 7 in \cite{DBLP:journals/talg/IndykN07}]
\label{fact:Gaussian_contraction}
    Let $ G\in \R^{t\times d}$ be a matrix of iid Gaussians $N(0,1/t)$. 
    Then 
    \[
    \forall x\in\R^d, r>0,
    \qquad
    \Pr \big[\|Gx\|>\|x\|/r\big] 
    \leq \big(\frac3r\big)^t.
    \]
\end{fact}

\begin{proof}[Proof of \Cref{thm:all_pairs_moderate_dim}]
Assume by the JL Lemma that $d=O(\log n)$.
Set
    $t=\lceil b_0\tfrac{\log n}{\log\alpha}\rceil$
for a sufficiently large absolute constant $b_0>0$, and let
$G$ be a $t\times d$ matrix of iid Gaussians $N(0,1/t)$. We first prove the desired guarantee for
$G$ up to an $\alpha'=\poly(\alpha)$ scaling, and then scale the map. (The change to $\alpha'$ only affects the hidden constant, since the target dimension depends on $\log \alpha$.)
To bound the expansion of $G$ observe that by \Cref{lem:ball_expansion_gaussians},
we have that with high probability, for all $x\in\R^d$, $\|Gx\|\leq \sqrt{d/t}\|x\|\leq O(\sqrt{\log \alpha}\|x\|)$. 
We will now bound the contraction of $G$.

Let $p_1,p_2\in P$. 
If $p_1=p_2$, then the desired inequality is trivial. 
Otherwise, 
by \Cref{fact:Gaussian_contraction},
\[
    \Pr\big[
        \|G(p_1-p_2)\|
        \leq
            \frac{1}{\sqrt{\alpha}}\|p_1-p_2\|
    \big]
    \le
    e^{-\Omega(t\log\alpha)} .
\]
By the choice of $t$, and by taking $b_0$ sufficiently large, the right hand side is at most
    $\frac{1}{10n^2}$.
Thus, by a union bound,
with probability $2/3$,
for all
$p_1,p_2\in P$,
\[
    \frac{1}{\sqrt{\alpha}}\|p_1-p_2\|\leq
    \|G(p_1-p_2)\|
    \leq
        O(\sqrt{\log\alpha})\|p_1-p_2\|.
\]
Rescaling $G$ we obtain distortion $\alpha'=O(\sqrt{\alpha}\log \alpha)$, concluding the proof.
\end{proof}
}

\section{Moderate Dimension Reduction for Vanilla $k$-Center}\label{sec:rand_proj_FPQ}\label{sec:dimension_reduction_for_kcenter}

In this section, we prove that a linear map via a matrix of iid Gaussians preserves the vanilla $k$-center value and solution, as follows.
This proves, in particular, the main claim in \Cref{thm:dim_reduction_informal}.

\begin{theorem}
\label{thm:dimension_reduction_for_kcenter}
Let $d,\alpha>1$ and $k\leq n$.
There is a random linear map $G:\R^d\to \R^t$
    with $t=O(\log k +\tfrac{d}{\alpha^2})$,
    such that 
    for every set $P\subset\R^d$ of size $n$,
    with probability at least $2/3$, 
    \begin{itemize}
\item $\OPT_{vanilla}(G(P))$ is an $O(\alpha)$-estimation for $\OPT_{vanilla}(P)$, and
        \item $C\subset P$ is an $O(\alpha)$-approximate vanilla $k$-center solution of $P$ whenever $G(C)$ is an $O(1)$-approximate vanilla $k$-center solution of $G(P)$.
    \end{itemize}
\end{theorem}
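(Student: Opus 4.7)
The plan is to combine a ball-expansion upper bound with a witness-based lower bound, using target dimension $t = \Theta(\log k + d/\alpha^2)$ and taking $G = \tfrac{1}{\sqrt{t}}\tilde G$ for $\tilde G\in\R^{t\times d}$ with iid $N(0,1)$ entries. For the upper direction $\OPT_{vanilla}(G(P))\leq O(\alpha)\OPT_{vanilla}(P)$, I would apply \Cref{lem:ball_expansion_gaussians} to $\tilde G$ to conclude that w.h.p.\ the operator norm of $G$ is at most $O(\sqrt{d/t})=O(\alpha)$. Applied to pairwise differences in $P$, any optimal $k$-center solution $C^{\ast}\subset\R^d$ of cost $\OPT_{vanilla}(P)$ then maps to a solution $G(C^{\ast})$ for $G(P)$ of cost at most $O(\alpha)\OPT_{vanilla}(P)$.

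For the matching lower direction, I would construct a witness $S\subset P$ of size $k+1$ by running Gonzalez's furthest-first traversal on $P$ for $k+1$ iterations. The standard analysis shows all pairwise distances in $S$ are at least $r_{k+1} := \dist(p_{k+1}, \{p_1,\dots,p_k\})$, and $r_{k+1}\geq \OPT_{vanilla}(P)$ because $\{p_1,\dots,p_k\}$ is a feasible $k$-center solution; hence $\OPT_{vanilla}(S)\geq \tfrac{1}{2}\OPT_{vanilla}(P)$, since the $k$-center cost of $k+1$ points equals half the minimum pairwise distance. Because $|S|=k+1$ and $t=\Omega(\log k)$, applying the JL Lemma (with $\epsilon=\tfrac{1}{2}$) to $S$ preserves all $O(k^2)$ pairwise distances within factor $2$ w.h.p., so the minimum pairwise distance in $G(S)$ is at least $\tfrac{1}{2}\OPT_{vanilla}(P)$ and thus $\OPT_{vanilla}(G(S))\geq \tfrac{1}{4}\OPT_{vanilla}(P)$. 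Combined with $G(S)\subset G(P)$, this yields the desired lower bound and concludes the $O(\alpha)$-estimation.

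For approximation preservation, I would use the Gonzalez prefix $S_k := \{p_1,\dots,p_k\}$ as a universal $2$-approximate anchor: for any $C\subset P$ the triangle inequality gives $\max_{p\in P}\dist(p,C)\leq \max_{p\in P}\dist(p,S_k)+\max_{s\in S_k}\dist(s,C) \leq 2\OPT_{vanilla}(P)+\max_{s\in S_k}\dist(s,C)$, so it suffices to bound $\max_{s\in S_k}\dist(s,C)$ by $O(\alpha)\OPT_{vanilla}(P)$. The hypothesis together with the upper bound above yields $\max_{s\in S_k}\dist(Gs,G(C))\leq O(\alpha)\OPT_{vanilla}(P)$, which can be translated to the same high-dimensional bound up to a constant factor by applying the JL Lemma to $S_k\cup C$ (size at most $2k$, well within the available $t=\Omega(\log k)$). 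I expect the main obstacle to be making the JL step uniform over \emph{all} choices of $C\subset P$: a naive union bound over the $\binom{n}{k}$ candidates would require $t=\Omega(k\log n)$, so obtaining the ``for all $C$'' guarantee will require exploiting both the extra slack provided by aiming for $O(\alpha)$-approximation (rather than $1+\epsilon$) and the sub-Gaussian tail of $G$ beyond what plain JL gives.
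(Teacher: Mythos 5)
Your estimation argument is essentially the paper's own: the upper direction via the ball-expansion bound of \Cref{lem:ball_expansion_gaussians} (operator norm $O(\sqrt{d/t})=O(\alpha)$ after scaling), and the lower direction via the Gonzalez witness $S$ of size $k+1$, JL on its $O(k^2)$ pairs, and the pigeonhole argument; this part is correct.

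The solution-preservation bullet, however, has a genuine gap, which you flag yourself but do not close. Your plan needs a no-shrinkage guarantee for $S_k\cup C$ simultaneously for every candidate $C\subset P$; since $C$ is effectively chosen after seeing $G$ (it is whatever set looks good in the projected space), this forces a union bound over all pairs in $S_k\times P$, i.e.\ $t=\Omega(\log n)$, which destroys the claimed bound $t=O(\log k+\tfrac{d}{\alpha^2})$, and neither the slack from aiming at $O(\alpha)$ nor the sub-Gaussian tail by itself repairs the step as you describe it. The paper sidesteps the uniformity issue by proving the contrapositive and routing every comparison through the \emph{fixed} optimal center set $C^*$: normalizing $\OPT_{vanilla}(P)=1$, if some $p'\in P$ has $\dist(p',B)>5\alpha$ for a candidate $B\subset P$, then for every $b\in B$ one writes $\|G(p'-b)\|\ \geq\ \|G(c_{p'}-c_b)\|-\|G(c_{p'}-p')\|-\|G(b-c_b)\|$, where $c_p\in C^*$ denotes the nearest optimal center of $p$. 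The two displacement terms are at most $T=\Theta(\sqrt{d})$ because \Cref{lem:ball_expansion_gaussians} is a uniform statement over the entire unit ball and $\|p-c_p\|\leq 1$ for all $p\in P$, while the leading term needs \Cref{fact:JL_lemma_one_vector} only for pairs inside the fixed witness $S\cup C^*$ (a union bound over $O(k^2)$ pairs), giving $\|G(c_{p'}-c_b)\|\geq\tfrac12\sqrt{t}\,\|c_{p'}-c_b\|\geq\tfrac12\sqrt{t}(5\alpha-2)$. Hence $\dist(Gp',G(B))\geq 20T$ while $\OPT_{vanilla}(G(P))\leq T$, so $G(B)$ cannot be a $2$-approximate solution of $G(P)$. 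The missing idea, in short, is not to invert $G$ on $S_k\cup C$ at all, but to absorb the point-to-center displacements with the uniform ball-expansion bound and reserve the JL union bound for the fixed set $S\cup C^*$; without it, the second bullet of the theorem is not established by your proposal.
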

We start by providing a proof of \Cref{lem:ball_expansion_gaussians},
which is key to our proof of \Cref{thm:dimension_reduction_for_kcenter}. 
    We will need the following fact about Gaussians.

\begin{fact}\label{fact:chi_concentration}
    For every $t\geq 1$ and a Gaussian $g\sim N(0,I_t)$, 
    \[
      \forall r\geq\sqrt{5t}, \qquad
      \Pr\big[\|g\| \geq r \big] \leq e^{-r^2/5}.
    \]
\end{fact}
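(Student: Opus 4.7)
\textbf{Proof Proposal for Fact~\ref{fact:chi_concentration}.}
The random variable $\|g\|^2 = \sum_{i=1}^{t} g_i^2$, where the $g_i$ are iid $N(0,1)$, follows a $\chi^2$ distribution with $t$ degrees of freedom. The plan is a standard Chernoff-style tail bound via the moment generating function (MGF), which is the textbook way to obtain sub-exponential concentration for quadratic forms in independent Gaussians.

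First I would compute (or cite) the MGF of a single squared standard Gaussian: for any $\lambda < 1/2$,
\[
\E\bigl[e^{\lambda g_i^2}\bigr] = \frac{1}{\sqrt{1-2\lambda}},
\]
so by independence $\E[e^{\lambda\|g\|^2}] = (1-2\lambda)^{-t/2}$. Applying Markov's inequality to $e^{\lambda \|g\|^2}$ and taking the event $\{\|g\|\geq r\} = \{\|g\|^2\geq r^2\}$ yields
\[
\Pr\bigl[\|g\|\geq r\bigr] \leq e^{-\lambda r^2}(1-2\lambda)^{-t/2},
\]
valid for every $\lambda\in(0,1/2)$.

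Next I would choose a concrete $\lambda$ that is compatible with the hypothesis $r\geq\sqrt{5t}$, rather than fully optimizing. The choice $\lambda = 2/5$ gives the bound $e^{-2r^2/5}\cdot 5^{t/2}$. To conclude $\Pr[\|g\|\geq r]\leq e^{-r^2/5}$ it then suffices to verify $e^{-r^2/5}\cdot 5^{t/2}\leq 1$, i.e.\ $r^2\geq \tfrac{5t\ln 5}{2}$, which is implied by the assumption $r^2\geq 5t$ since $\tfrac{\ln 5}{2}<1$.

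The only real work is the constant tuning in the last step; there is no conceptual obstacle, since the chi-squared MGF gives an exact closed form, and the hypothesis $r\geq\sqrt{5t}$ is already in the deviation-dominated regime where the sub-exponential tail $e^{-r^2/5}$ is loose enough to absorb the MGF factor $(1-2\lambda)^{-t/2}$. If one prefers, the same argument goes through with any $\lambda\in(0,1/2)$ satisfying $\lambda - 1/5 \geq \tfrac{t}{2r^2}\ln\tfrac{1}{1-2\lambda}$, so the constants in the statement (the thresholds $\sqrt{5t}$ and $e^{-r^2/5}$) are essentially arbitrary and could be tightened with a different choice of $\lambda$ at the cost of a less clean exponent.
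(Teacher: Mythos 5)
Your proof is correct. The MGF computation, the Chernoff bound $\Pr[\|g\|\geq r]\leq e^{-\lambda r^2}(1-2\lambda)^{-t/2}$, and the choice $\lambda=2/5$ are all valid, and the final numerical check is right: the requirement $r^2\geq \tfrac{5t\ln 5}{2}\approx 4.02\,t$ is indeed weaker than the hypothesis $r^2\geq 5t$. Your route differs from the paper's, which does not open the moment generating function at all: it cites Laurent--Massart's inequality $\Pr[\|g\|^2-t\geq 2\sqrt{xt}+2x]\leq e^{-x}$, sets $x=r^2/5$, and checks that $r\geq\sqrt{5t}$ forces $r^2\geq t+2\sqrt{xt}+2x$, so the event $\{\|g\|\geq r\}$ is contained in the Laurent--Massart event. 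The two arguments rest on the same underlying machinery (Laurent--Massart is itself a Chernoff-type bound), but yours is self-contained and makes the role of the constant $5$ transparent as slack in the choice of $\lambda$, whereas the paper's is a two-line reduction to a standard cited result whose algebra hides the tuning inside the hypothesis $r\geq\sqrt{5t}$. Either version suffices for the application in \Cref{sec:proof_lem_expansion_gaussians}, where only the $e^{-\Omega(r^2)}$ decay in the regime $r=\Omega(\sqrt{t})$ matters.
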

\begin{proof}
    By Laurent and Massart~\cite[Lemma 1]{10.1214/aos/1015957395_Laurent_Massart}, 
    \[
      \forall x>0, \qquad 
      \Pr\big[\|g\|^2-t\geq 2\sqrt{xt}+2x \big]\leq e^{-x}.
    \]
    Set $r^2=5x$, then
    $r^2\geq t + 2\sqrt{xt}+2x$,
    and thus
    \[
    \Pr\big[\|g\| \geq r \big] \leq \Pr\big[\|g\|^2-t\geq 2\sqrt{xt}+2x \big]\leq e^{-x}=e^{-r^2/5}.
    \]
\end{proof}
We will now prove \Cref{lem:ball_expansion_gaussians},
which states that for a $t\times d$ matrix $G$ of iid Gaussians $N(0,1)$ 
and a suitable absolute constant $c_0>0$, 
\[
  \Pr\big[ \sup_{\|x\| \leq 1} \|Gx\| > c_0 \sqrt{d} \big]
  \leq  2^{-\Omega(d)}.  
\]
\new{Throughout, the Euclidean ball of radius $r>0$ around $x\in\R^d$
is denoted by $B(x,r)=\set{y\in \R^d: \|x-y\|\leq r}$.
We denote the origin by $\vec{0}\in \R^d$, 
and thus $B(\vec{0},1)$ is the Euclidean unit ball.
}

\begin{proof}[Proof of \Cref{lem:ball_expansion_gaussians}.]
    Denote $T=c_0\sqrt{d}$.
    Define subsets $I_0,I_1,I_2,...\subset B(\vec{0},1)$ iteratively as follows.
    Let $I_0=\set{\vec{0}}$.
    For every $y\in I_j$, let $S_j(y)$ be a minimal set such that $\bigcup_{s\in S_j(y)}B(s,2^{-j-1})$ covers $B(y,2^{-j})$.
    Clearly, $\|y-s\|\leq 2^{-j}$.
    Let $I_{j+1}=\bigcup_{y\in I_j}S_j(y)$.
    It then holds that $|S_j(y)|\leq e^{O(d)}$ (see e.g. \cite{rogers_1963}).
    By induction, $|I_j|\leq e^{O(jd)}$. 
    
    For every $y\in I_j$ and $s\in S_j(y)$, 
    \begin{align*}
        \Pr\big[\|G(y-s)\|> \tfrac{T}{4}(\tfrac{4}{3})^{-j}\big]    &\leq  \Pr\big[\|G(y-s)\|> \tfrac{T}{4}(\tfrac{4}{3})^{-j}\|y-s\| 2^j\big] \\
        \intertext{and since $G\tfrac{y-s}{\|y-s\|}\sim N(0,I_t)$ we have by \Cref{fact:chi_concentration}, }
                                            &\leq e^{-\Omega((T(2/3)^{-j})^2)}
                                             = e^{-\Omega(d(2/3)^{-2j})}.
    \end{align*}
    It follows by a union bound that 
    \[
      \Pr\big[\exists y\in I_j,s\in S_j(y),\; 
        \|G(y-s)\|> \tfrac{T}{4}(\tfrac{4}{3})^{-j}\big]
      \leq |I_{j+1}|\cdot e^{-\Omega(d(2/3)^{-2j})}
      \leq e^{-\Omega(d (2/3)^{-2j})}.
    \]
    By another union bound, the probability that there exists $j\geq 0$ for which the above event occurs, is at most $\sum_{j=0}^\infty e^{-\Omega(d (2/3)^{-2j})}\leq e^{-\Omega(d)}$.
    If we assume this event does not occur,
    then every $x\in B(0,1)$ satisfies $\|G x\|\leq T$, as follows.
    Every $x\in B(0,1)$ can be expressed as $x=\sum_{j=0}^\infty (a_j-a_{j+1})$ where $a_{j}\in I_{j}$ and $a_{j+1}\in S_j(a_j)$. 
    Thus by the triangle inequality, $\|G x\|\leq \sum_{j=0}^\infty \|G(a_j-a_{j+1})\| \leq \sum_{j=0}^\infty \tfrac{T}{4}(\tfrac{4}{3})^{-j} = T$.
\end{proof}

    We are now ready to prove \Cref{thm:dimension_reduction_for_kcenter}.

\begin{proof}[Proof of \Cref{thm:dimension_reduction_for_kcenter}.]
Assume without loss of generality that 
$\OPT_{vanilla}(P)=1$.
Let $C^*\subset \R^d$ be a set of optimal centers, and let $S\subseteq P$ be an output of Gonzalez's algorithm~\cite{DBLP:journals/tcs/Gonzalez85} after $k+1$ steps, hence, $|S|=k+1$ and $\min_{s_1,s_2\in S} \|s_1-s_2\|\geq 1$.
We treat $S\cup C^*$ as a \emph{witness}, as described in \Cref{sec:technical_overview}.
Let $\epsilon=\tfrac{1}{2}$,
set $T=\Theta(\sqrt{d})$ with a hidden constant that satisfies \Cref{lem:ball_expansion_gaussians}, 
and set $t=O(\log k) + 400 \tfrac{T^2}{\alpha^2}$ 
such that the bound $2^{-\Omega(t)}$ in \Cref{fact:JL_lemma_one_vector}
is at most $\tfrac{1}{10(2k+1)^2}$.
Let $G$ be a $t\times d$ matrix of iid Gaussians $N(0,1)$.
Then, by \Cref{fact:JL_lemma_one_vector} and a union bound,
\begin{equation}\label{eq:JL_for_k_center}
  \Pr\big[\forall p_1,p_2\in S\cup C^*,\; 
        \|G(p_1-p_2)\|\in [1\pm 0.5]\sqrt{t}\|p_1-p_2\| \big]
    \geq 1-(2k+1)^2 2^{-\Omega(t)} 
    \geq \tfrac{9}{10}.
\end{equation}
By \Cref{lem:ball_expansion_gaussians},
\begin{equation}\label{eq:balls_bounded_expansion}
  \Pr\big[\forall x\in B(\vec{0},1),\;  \|Gx\|\leq T \big]
  \geq  1-2^{-\Omega(d)}. 
\end{equation} 
Assuming the events in \Cref{eq:balls_bounded_expansion,eq:JL_for_k_center} happen, the following holds. 

\subparagraph*{Value.}
For every set $C_G\subset \R^t$ of size $k$,
by restricting the pointset to the set $S$, and since $|S|=k+1 > |C_G|$,
\begin{align*}
\max_{p\in P} \dist(Gp,C_G) 
    &\geq \max_{s\in S} \dist(Gs,C_G) \geq \tfrac{1}{2} \min_{s_1,s_2\in S} \|G(s_1-s_2)\| \\
\shortintertext{and now by the event in \Cref{eq:JL_for_k_center} and by the choice of $S$,}
    &\geq \tfrac{\sqrt{t}}{4} \min_{s_1,s_2\in S} \|s_1-s_2\|
    \geq \tfrac{\sqrt{t}}{4} \geq \tfrac{T}{\alpha}.
\end{align*}
In the other direction, $\max_{p\in P} \dist(p,C^*)\leq 1$,
and thus by using $G(C^*)$ as a center set and
by the event in \Cref{eq:balls_bounded_expansion}, 
\begin{align*}
    \OPT_{vanilla}(G(P))
&\leq  \max_{p\in P} \dist(Gp,G(C^*)) \leq T.
\end{align*}
Scaling $G$ by a factor of $\tfrac{\alpha}{T}$ (i.e., the final map is $G'=\tfrac{\alpha}{T}G$) proves the first bullet.
For ease of presentation, we analyze the second bullet (about preserving solutions) without this scaling,
as the scaling factor cancels and does not affect the argument.

\subparagraph*{Solution.}
We want to show that $C\subseteq P$ is an $O(\alpha)$-approximate $k$-center solution of $P$ whenever $G(C)$ is a $2$-approximate $k$-center solution of $G(P)$.
We shall actually prove the contrapositive claim,
and consider a set $B\subseteq P$ for which there exists a point $p'\in P$ such that $\dist(p',B)> 5\alpha$.
Now for every point $p\in P$, denote by $c_p$ its closest center from $C^*$.
Then the following holds. By the triangle inequality,
\begin{align*}
\dist(Gp',G(B)) 
&=\min_{b\in B} \|G(p'-b)\| 
\geq \min_{b\in B} \|G(c_{p'}-c_b)\| - \|G(c_{p'}-p')\| - \|G(b-c_b)\| \\
\shortintertext{by the events in Equations \Cref{eq:balls_bounded_expansion} and \Cref{eq:JL_for_k_center},}
&\geq \min_{b\in B} \|G(c_{p'}-c_b)\| - 2T 
\geq \min_{b\in B} \tfrac{1}{2}\sqrt{t}\|c_{p'}-c_b\| - 2T \\
\shortintertext{by the triangle inequality,}
& \geq \min_{b\in B} \tfrac{1}{2}\sqrt{t}(\|p'-b\| - \|c_{p'}-p'\| - \|b-c_b\|) - 2T \\
\shortintertext{by our assumptions,}
&\geq  \tfrac{1}{2}\sqrt{t} (5\alpha - 2) - 2T
\geq 20 T.
\end{align*}
Thus, $G(B)$ is not a $2$-approximate $k$-center solution of $G(P)$.
This concludes the proof of \Cref{thm:dimension_reduction_for_kcenter}.
The constant $2$ in the approximation is arbitrary 
and could be changed to any other constant by adapting the other parameters.
\end{proof}

\new{

\subsection{On the Optimality of \Cref{thm:dim_reduction_informal}}\label{sec:lower_bound_gaussian_map}

Our bound for $\alpha$-estimation of vanilla $k$-center
is nearly optimal
when the dimension reduction is defined via a matrix $G$ of iid Gaussians (and plausibly for all JL maps).
We focus first on the leading term $O(\tfrac{\log n}{\alpha^2})$, and show that it is tight.
Consider the diameter problem, 
whose value is within factor $2$ of the $1$-center value.
By letting $P$ be points on a one-dimensional line,
one can see that the scaling factor of $G$ must be $\Omega(1/\sqrt{t})$.
Now let $P$ be an $\epsilon$-net of the unit sphere, say for $\epsilon=0.1$,
which can be realized with $d=\Theta(\log n)$.
Then with high probability, $G$ stretches some unit vector to length
$\Omega(\sqrt{{d}/{t}})$ \cite[Theorem 4.6.1]{vershynin_2018},
hence there is some pair of points in $G(P)$ whose distance is at least $\Omega(\sqrt{{d}/{t}})$. 
Since $G$ preserves the diameter, that distance is also bounded by $2\alpha$,
and altogether $t=\Omega(\tfrac{d}{\alpha^2})=\Omega(\tfrac{\log n}{\alpha^2})$.

Similarly, the second term in our bound is nearly optimal, 
namely, the target dimension must be $t=\Omega(\tfrac{\log k}{\log \alpha})$.
For ease of presentation, assume $G$ is a matrix of iid $N(0,\tfrac{1}{t})$, but we will have to scale it at the end.
On one hand, if $P$ is a set of $k$ orthonormal vectors and the origin,
then $\OPT_{vanilla}(P)=\tfrac{1}{2}$, and
with high probability, $\OPT_{vanilla}(G(P))\leq 2^{-\Omega(\tfrac{\log k}{t})}$, as follows.
The probability that none of the $k$ orthonormal vectors shrinks
to length at most $1/\beta$ is 
\[
  \prod_{i=1}^k \big(1-\Pr[\|G e_i\|\leq \tfrac{1}{\beta}]\big)
  = \big(1-\Pr[\|G_1\|\leq \tfrac{1}{\beta}]\big)^k
  \leq (1-(e\beta)^{-t})^k
  \approx 1-k(e\beta)^{-t},
\]
where $G_1$ denotes the first column of $G$.
Hence, if $\beta = 2^{\Omega(\tfrac{\log k}{t})}$, then, with high probability,
there is a vector among the $k$ orthonormal vectors whose length shrinks to at most $1/\beta$,
and thus $\OPT_{vanilla}(G(P))\leq \tfrac{1}{2\beta}$.
On the other hand, if $P$ is a set of points on a line,
then $G$ preserves all pairwise distances with high probability,
hence $\OPT_{vanilla}(G(P))=O(\OPT_{vanilla}(P))$.
Thus, to get $\alpha$-approximation (after scaling $G$ appropriately), it must be that $\alpha = \Omega(\beta)$, and hence $t=\Omega(\tfrac{\log k}{\log \alpha})$.

\subsection{Moderate Dimension Reduction for FPQ}\label{sec:fpq_appendix}
    
    Before proceeding into the $k$-center variants, we first provide a complete proof for FPQ.

\begin{theorem}\label{thm:FPQ_via_random_projection}
Let $d,\alpha>1$ and $k\leq n$.
There is a random linear map $G:\R^d\to \R^t$
    with $t=O(\log k +\tfrac{d}{\alpha^2})$,
    such that for every two sets,
    $P\subset\R^d$ of size $n$ and $Q\subset\R^d$ of size $k$,
    with probability at least $2/3$, \begin{itemize}
        \item the value $FPQ_k(G(P),G(Q))$ is an $O(\alpha)$-estimation for the value $FPQ_k(P,Q)$, and
        \item $p\in P$ is an $O(\alpha)$-approximate solution for $FPQ_k(P,Q)$ whenever $Gp$ is an $O(1)$-approximate solution for $FPQ_k(G(P),G(Q))$.
    \end{itemize}
\end{theorem}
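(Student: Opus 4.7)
The plan is to follow the warm-up template sketched in Section \ref{sec:technical_overview}: take $G \in \R^{t \times d}$ to be a matrix of iid Gaussians $N(0,1)$ (rescaled at the end), and combine two high-probability events --- the spectral/expansion bound of Lemma \ref{lem:ball_expansion_gaussians} and a JL-type distance preservation (Fact \ref{fact:JL_lemma_one_vector}) restricted to a small witness set. Choose $t = \Theta(\log k + d/\alpha^2)$ large enough that (i) $\sqrt{d/t} = O(\alpha)$ and (ii) Fact \ref{fact:JL_lemma_one_vector} with $\epsilon = 1/2$ gives failure probability $\ll 1/(k+1)^2$ per pair.

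The central idea is to use the \emph{witness} $W := Q \cup \{p^*\}$, where $p^* \in \arg\max_{p \in P} \dist(p, Q)$. Since $|W| \leq k+1$, a union bound over the $O(k^2)$ pairs gives, with probability at least $9/10$,
\[
\forall u, v \in W, \qquad \|G(u-v)\| \in \bigl[\tfrac{1}{2},\, \tfrac{3}{2}\bigr] \cdot \sqrt{t}\, \|u-v\|.
\]
Simultaneously, Lemma \ref{lem:ball_expansion_gaussians} yields $\sup_{\|x\| \leq 1} \|Gx\| \leq c_0 \sqrt{d}$ with probability $1 - 2^{-\Omega(d)}$. A further union bound leaves both events together with probability at least $2/3$.

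Conditioning on both events, I would derive the two-sided estimation as follows. For the \emph{upper bound}, apply $G$ to the optimal witness pair: for every $p \in P$ and its closest $q \in Q$, the expansion bound gives $\|G(p-q)\| \leq c_0\sqrt{d}\,\|p-q\| \leq c_0\sqrt{d}\cdot FPQ_k(P,Q)$, so $FPQ_k(G(P), G(Q)) \leq c_0\sqrt{d} \cdot FPQ_k(P,Q)$. For the \emph{lower bound}, the JL estimate on $W$ gives $\dist(Gp^*, G(Q)) \geq \tfrac{\sqrt{t}}{2}\dist(p^*, Q) = \tfrac{\sqrt{t}}{2} FPQ_k(P, Q)$, hence $FPQ_k(G(P), G(Q)) \geq \tfrac{\sqrt{t}}{2} FPQ_k(P, Q)$. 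The two bounds differ by a factor of $O(\sqrt{d/t}) = O(\alpha)$, so after rescaling $G$ by $\Theta(\alpha/\sqrt{d})$ the claim of $O(\alpha)$-estimation follows.

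For the second bullet, I would prove the contrapositive: suppose $p \in P$ satisfies $\dist(p, Q) \leq FPQ_k(P, Q)/(C\alpha)$ for a sufficiently large constant $C$. Then by Lemma \ref{lem:ball_expansion_gaussians} applied to each $p - q$ with $q \in Q$, we get $\dist(Gp, G(Q)) \leq c_0\sqrt{d}\cdot \dist(p, Q)$, and combining with the lower bound $FPQ_k(G(P), G(Q)) \geq \tfrac{\sqrt{t}}{2} FPQ_k(P, Q)$ yields $\dist(Gp, G(Q)) = O(1/C) \cdot FPQ_k(G(P), G(Q))$, so $Gp$ is not an $O(1)$-approximate solution for $C$ large enough. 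I do not expect a genuine obstacle here; the only subtlety is bookkeeping a single consistent scaling of $G$ that simultaneously produces the upper estimation bound, the lower estimation bound, and the solution-preservation contrapositive --- essentially the same coordination that appears in the proof of Theorem \ref{thm:dimension_reduction_for_kcenter}, but simpler because the witness is just $Q \cup \{p^*\}$ rather than an output of Gonzalez's algorithm.
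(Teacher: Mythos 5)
Your proposal is correct and follows essentially the same route as the paper's proof: the witness $Q\cup\{p^*\}$ with the JL bound (\Cref{fact:JL_lemma_one_vector}) for the lower bound, \Cref{lem:ball_expansion_gaussians} (used as an operator-norm bound) for the upper bound, and a contrapositive argument for solution preservation, with only cosmetic differences (you union-bound over all pairs of the witness rather than just the $k$ pairs $(p^*,q)$, and you defer the normalization/scaling to the end instead of assuming $\dist(p^*,Q)=\alpha$ up front).
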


\begin{proof}
Let $p^*\in P$ be a point that realizes $FPQ_k(P,Q)$, 
i.e., $\dist(p^*,Q) = \max_{p\in P} \dist(p,Q)$.
    Assume without loss of generality that $\dist(p^*,Q) = \alpha$, which can be obtained by scaling all the points in $P\cup Q$ by an $\tfrac{\alpha}{\dist(p^*,Q)}$ factor.
    Set $T=\Theta(\sqrt{d})$ such that the hidden constant satisfies the condition in \Cref{lem:ball_expansion_gaussians}, and
    set $t=O(\log k) + 400 \tfrac{T^2}{\alpha^2}$ and $\epsilon=\tfrac{1}{2}$, such that the bound in \Cref{fact:JL_lemma_one_vector} is $2^{-\Omega(t)}\leq \tfrac{1}{10k}$.
    Let $G$ be a $t\times d$ matrix of iid Gaussians $N(0,1)$.
    Then, by \Cref{fact:JL_lemma_one_vector} and a union bound,
    \begin{equation}\label{eq:JL_for_FPQ}
      \Pr\big[\forall q\in Q, \;
        \|G(p^*-q)\|\in [1\pm 0.5]\sqrt{t}\|p^*-q\| \big]
      \geq 1-k 2^{-\Omega(t)}
      \geq \tfrac{9}{10}.
    \end{equation}
    \Cref{eq:balls_bounded_expansion} holds as in the proof of \Cref{thm:dimension_reduction_for_kcenter}.
    If we assume the two events in \Cref{eq:balls_bounded_expansion,eq:JL_for_FPQ} happen, then 
    \[
    \dist(Gp^*,G(Q)) \geq \tfrac{1}{2}\sqrt{t} \dist(p^*,Q)=\tfrac{1}{2}\sqrt{t}\alpha \geq 10T,
    \]
    and for every $x\in\R^d$ for which $\dist(x,Q)\leq 1$,
    \[
    \dist(Gx,G(Q)) \leq T.
    \]
    Hence, for every $p\in P$, $\dist(Gp,G(Q)) \leq T \dist(p,Q) \leq \alpha T$.
    By scaling $G$, we get an $O(\alpha)$-estimation.
    
    Similarly, a $5$-approximate $FPQ_k(G(P),G(Q))$ is a point $Gp\in G(P)$ that must satisfy $\dist(Gp,G(Q)) \geq 2T$, hence it corresponds to a point $p\in P$ such that $\dist(p,Q) > 1 = \tfrac{1}{\alpha} \dist(p^*,Q)$.
    The constant $5$ in the approximation is arbitrary 
    and could be changed to any other constant by increasing the constants in $t$.
    This concludes the proof of \Cref{thm:FPQ_via_random_projection}.
\end{proof}

 }

 \section{Moderate Dimension Reduction for $k$-Center with Outliers}
\label{sec:outliers}
In this section, we design a moderate dimension reduction for $k$-center with $z$ outliers,
and demonstrate its application to streaming algorithms.
We denote by $\OPT_{outliers}(P)$ the optimal value of $k$-center with $z$ outliers of $P$, and say that $C,Z\subseteq P$ with $|Z|\leq z$ is an $\alpha$-approximate solution of $k$-center with $z$ outliers for $P$ if $\max_{p\in P\setminus Z} \dist(p,C)\leq \alpha \opt_{outliers}(P)$.
\begin{theorem}\label{thm:outliers_k_center}
Let $k,z\leq n$ and $d,\alpha>1$.
There is a random linear map $G:\R^d\to \R^t$
  with $t=O(\tfrac{d}{\alpha^2} + \log(kz))$,
  such that 
  for every set $P\subset\R^d$ of size $n$, 
  with probability at least $2/3$,
  \begin{itemize}
      \item $\OPT_{outliers}(G(P))\in \big[\OPT_{outliers}(P),O(\alpha)\cdot \OPT_{outliers}(P) \big]$, and
      \item $C,Z\subseteq P$ is an $O(\alpha)$-approximate solution of $k$-center with $z$ outliers for $P$ whenever $G(C),G(Z)$ is an $O(1)$-approximate solution of $k$-center with $z$ outliers for $G(P)$.
  \end{itemize}
\end{theorem}

Our proof uses a witness of size $O(kz)$, i.e., a subset $P'\subseteq P$ of size $|P'|=O(kz)$, such that $\OPT_{outliers}(P') = \Omega(\OPT_{outliers}(P))$.
\new{
This witness follows from known results for coresets that are robust to outliers~\cite[Corollary 4]{AbbarAIMahabadiV13}.
It is constructed by executing Gonzalez's algorithm $z+1$ times,
each time deleting the points returned by the previous execution.
An important property of this construction is that for every choice of $z$ outliers,
one of the $z+1$ executions of Gonzalez's algorithm returns a set of points without outliers.
We initially were not aware of~\cite{AbbarAIMahabadiV13},
and earlier versions of our paper (including the preliminary version in SoCG 2024~\cite{JiangKS24})
presented it as a new result. 
We prove the correctness of this witness for completeness.
}

\begin{lemma}[Witness for the outliers variant \cite{AbbarAIMahabadiV13}]\label{lem:witness_outliers}
    For every $k,z$ and $P\subset\R^d$, there is a subset $P'\subseteq P$ of size $|P'|=(k+1)(z+1)$, such that 
    $\OPT_{outliers}(P')\in [\tfrac{1}{3}\OPT_{outliers}(P),\OPT_{outliers}(P)]$.
\end{lemma}

\begin{remark}
    The size of the witness of \Cref{lem:witness_outliers} is tight up to low-order terms, by the following example.
    Consider a set $X$ of $(k+1)z+1$ points, where there are $k+1$ locations with pairwise distances $1$, such that each location contains $z$ points, and the last remaining point is at distance $\tfrac{1}{3}$ from one of these locations.
    Clearly, the cost of $k$-center with $z$ outliers is at least $\tfrac{1}{6}$,
    by considering the $z$ points from one of the locations as outliers.
    However, every strict subset of $X$ (i.e., excluding even one point) has cost $0$, as there will be a location with $z-1$ points, which can be taken as outliers, together with that last point.
\end{remark}

\begin{proof}[Proof of \Cref{lem:witness_outliers}]
For a set $X$, we denote by $Gonz(X,k+1)$ a set of $k+1$ points computed by executing Gonzalez's algorithm (for $k$ iterations) on $X$,
breaking ties (e.g., the initial point) arbitrarily. 
Given $P\subset\R^d$, construct a witness for $k$-center with $z$ outliers as follows.
\begin{enumerate}
    \item $X \gets P$
    \item for $i=1,\ldots,z+1$
    \item \qquad $C_i \gets Gonz(X,k+1)$
    \item \qquad $X\gets X\setminus C_i$
    \item return $P' = \cup_{i\in [z+1]} C_i$ as a witness
\end{enumerate}
Clearly, $\OPT_{outliers}(P')\leq \OPT_{outliers}(P)$. It remains to prove that $\OPT_{outliers}(P') \geq \tfrac{1}{3}\OPT_{outliers}(P)$.

Let $C'$ and $Z'$ be the optimal centers and outliers for $P'$, respectively.
Since $|Z'|\leq z$, there exists $i\in [z+1]$ such that $C_i\cap Z'=\emptyset$.
By the pigeonhole principle, there are $p_1,p_2\in C_i$ that are clustered to the same cluster by $C'$, and thus by the triangle inequality, $\dist(p_1,p_2)\leq 2\OPT_{outliers}(P')$. 
Suppose without loss of generality that $p_1$ was added to $C_i$ before $p_2$ in the execution of Gonzalez's algorithm.
We now upper bound $\OPT_{outliers}(P)$ by considering centers $C'$ and outliers $Z'$, and get
\begin{align*}
    \OPT_{outliers}(P)  
    &\leq \max_{p\in P\setminus Z'} \dist(p, C') 
    = \max \{ \max_{p\in P\setminus P'} \dist(p, C'),\max_{p\in P'\setminus Z'} \dist(p, C') \}. 
\end{align*}
The second term is by definition $\OPT_{outliers}(P')$, so let us bound the first term. 
For every $p\in P\setminus P'$,
by the triangle inequality,
\begin{equation} \label{eq:outliers1}
 \dist(p, C') \leq \min_{p'\in C_i} \{ \dist(p,p') + \dist(p', C') \}
\leq  \dist(p,C_i) + \OPT_{outliers}(P').
\end{equation}
Let $\hat{C}_i\subseteq C_i$ be the set $C_i$ \emph{at the time that $p_2$ was chosen} (in the $i$-th execution of Gonzalez's algorithm). Then, because $p\notin P'$ is available at this time, and since $p_1\in \hat{C}_i$,
\begin{align*}
    \dist(p,C_i)&\leq \dist(p,\hat{C}_i)  
    \leq \dist(p_2,\hat{C}_i) 
\leq \dist(p_2,p_1)  \leq 2\OPT_{outliers}(P'). 
\end{align*}
Together with \eqref{eq:outliers1}, we obtain 
$\dist(p,C') 
\leq 3\OPT_{outliers}(P')
$, 
which concludes the proof.
\end{proof}

\begin{proof}[Proof of \Cref{thm:outliers_k_center}]
    The proof that the value is preserved within factor $O(\alpha)$ using target dimension $t=O(\tfrac{d}{\alpha^2} + \log(kz))$ is the same as the proof for the vanilla variant, albeit with the witness given by \Cref{lem:witness_outliers}.
    As for the proof that solutions are preserved, it only requires few minor changes, as follows.

    Assume without loss of generality that $\OPT_{outliers}(P)=1$.
    Let $C^*$ and $Z^*$ be sets of optimal centers and outliers, respectively.
    We can assume without loss of generality that the points in $Z^*$ are furthest from $C^*$.
    Let $G$ be a $t\times d$ matrix of iid Gaussians $N(0,1)$, and
    set $t\geq b_0 \log(kz)$, where $b_0>0$ is an absolute constant such that by $\Cref{fact:JL_lemma_one_vector}$ and a union bound,
    \begin{equation}\label{eq:JL_outliers}
        \Pr\big[\forall p_1,p_2\in P'\cup C^*\cup Z^*,\; 
        \|G(p_1-p_2)\|\in [1\pm 0.5]\sqrt{t}\|p_1-p_2\| \big] 
        \geq \tfrac{9}{10}.
    \end{equation}
    For every point $p\in P\setminus Z^*$, denote by $c_p$ its closest center from $C^*$,
    and with slight abuse notation, for $p\in Z^*$ denote by $c_p$ the point $p$ itself.
The proof now considers sets $B\subseteq P$ of size $k$ and $Z'\subseteq P$ of size $z$ for which there exists a point $p'\in P\setminus Z'$ such that $\dist(p',B)>5\alpha$,
    and proceeds similarly to the proof of \Cref{thm:dimension_reduction_for_kcenter}.
\end{proof}

 \section{Moderate Dimension Reduction for $k$-Center with an Assignment Constraint}
\label{sec:dim_red_assignment_constraint}
\label{SEC:DIM_RED_ASSIGNMENT_CONSTRAINT} 

In this section, we consider $k$-center with an assignment constraint,
which captures the capacitated and fair variants of $k$-center,
as described in~\Cref{sec:main_results}. 
We design for this problem a moderate dimension reduction,
and demonstrate its application to streaming algorithms. 
Our definition below of an assignment constraint follows
the one used in~\cite{DBLP:conf/waoa/SchmidtSS19,DBLP:conf/nips/HuangJV19,DBLP:conf/icalp/BandyapadhyayFS21,DBLP:conf/focs/BravermanCJKST022}
for other $k$-clustering problems.
The \emph{radius} of a pointset is the optimal value of $1$-center clustering for it.

\begin{definition}\label{def:assigment_constraint}
An \emph{assignment} is a map $\pi:[n]\to [k]$.
An \emph{assignment constraint} is a partition of all possible assignments
into feasible and infeasible ones, formalized as $\AC:[k]^n\to \{0,1\}$.
\end{definition}
To view a partition of an $n$-point dataset $P$ to $k$ clusters as an assignment,
represent $P$ by $[n]$ and the clusters by $[k]$, 
in an arbitrary manner (not by the geometry of the points).
\Cref{def:assigment_constraint} can model clustering with capacity $L>0$,
by declaring an assignment $\pi$ to be feasible
if $|\pi^{-1}(i)|\leq L$ for all $i\in[k]$. 
To exemplify how it can model fair clustering,
suppose the first $n/3$ points in $P$ are colored blue and the others are red;
then declare $\pi$ to be feasible 
if in every $\pi^{-1}(i)$,
exactly $1/3$ of the elements 
are from the range $\{1,\ldots,\frac{n}{3}\}$.

\begin{definition}
In \emph{$k$-center with an assignment constraint $\AC$},
the input is a set $P\subset \R^d$ of $n$ points,
and the goal is to partition $P$ into $k$ sets (called clusters)
in a manner feasible according to $\AC$ when viewed as an assignment,
so as to minimize the maximum cluster radius.
The minimum value attained is denoted by $\OPT_{\AC}(P)$.
A \emph{solution} to this problem is a partition of $P$ into $k$ sets,
and it is called \emph{$\alpha$-approximate} for $\alpha\geq 1$ if it is feasible and
has value at most $\alpha\cdot \OPT_{\AC}(P)$.
\end{definition}

The next theorem shows that reducing to dimension $O(\tfrac{d}{\alpha}+\log k)$ preserves,
with high probability, the value of $k$-center with an assignment constraint up to an $O(\alpha)$ factor,
simultaneously for all assignment constraints.

\begin{theorem}\label{thm:constrained_k_center}
Let $d,\alpha>8$ and $k\leq n$.
There is a random linear map $G:\R^d\to \R^t$
with $t=O(\tfrac{d}{\alpha} + \log k )$,
  such that 
  for every set $P\subset\R^d$ of size $n$, with probability at least $2/3$, the following holds.
  For all $\AC:[k]^n\to \{0,1\}$,
  \begin{itemize}
      \item $\OPT_{\AC}(G(P))\in \big[\OPT_{\AC}(P),O(\alpha)\cdot  \OPT_{\AC}(P) \big]$, and
      \item \new{a feasible partition of $P$ to $k$ clusters and a set of centers $C\subseteq P$ are an $O(\alpha)$-approximate solution of $k$-center with assignment constraint $\AC$
        whenever the corresponding partition of $G(P)$ and set of centers $G(C)$ are an $O(1)$-approximate solution of $k$-center with assignment constraint $\AC$ for $G(P)$.}
  \end{itemize}
\end{theorem}

Our dimension bound here is worse than for the vanilla variant by factor $\alpha$,
essentially because our lower bound for the value of $G(P)$ is weaker.
More precisely, we take $t=O(\tfrac{d}{\alpha^2}+\log k)$ as before 
and let $G$ to be a matrix of iid Gaussians $N(0,\tfrac{1}{t})$,
however now we will prove that
$\tfrac{1}{\alpha} \OPT_{\AC}(P) \leq \OPT_{\AC}(G(P)) \leq \alpha \OPT_{\AC}(P)$.
While the upper bound here is as before,
the lower bound is weaker by factor $\alpha$,
hence we will have to scale $G$ appropriately and conclude the theorem for approximation $\alpha'=\alpha^2$.

\new{
Let us recall the proof idea provided in \Cref{sec:technical_overview}.
We do not present a witness for this variant,
but rather bound the decrease in value via a different method.
Our proof compares $\OPT_{\AC}$ to $\OPT_{vanilla}$ on the same input $P$;
if these values are close,
then the proof follows from the fact that $\OPT_{vanilla}$ is preserved.
Otherwise, $\OPT_{\AC}$ is significantly larger than $\OPT_{vanilla}$, and for the sake of analysis, we ``move'' every data point to its nearest ``vanilla center'', and get a weighted set of only $k$ points, whose total weight is $n$.
Moving points of $P$ corresponds to moving points of $G(P)$,
which in turn does not change $\OPT_{\AC}(G(P))$ by too much,
essentially because by \Cref{lem:ball_expansion_gaussians},
the map $G$ keeps every point close to its nearest vanilla center.
}
We will need the following lemma.
Throughout this section, all sets are multisets.
\begin{lemma}\label{lem:move_points_capacitated_k_center}
For every assignment constraint $\AC$, and for every set $X$ of $n$ points,
if the set $X'$ is constructed by moving every point in $X$ at most distance $\Delta>0$,
then $\abs{\OPT_{\AC}(X) - \OPT_{\AC}(X')} \leq \Delta $.
\end{lemma}
\begin{proof}
To see that $\OPT_{\AC}(X')\geq \OPT_{\AC}(X)-\Delta$,
consider an optimal solution for $X'$, increase its value (i.e., the radius of every cluster) by $\Delta$ and use the triangle inequality.
Now $\OPT_{\AC}(X)\geq \OPT_{\AC}(X')-\Delta$ follows by symmetry,
because $X$ can be obtained from $X'$ by moving every point at most distance $\Delta>0$.
\end{proof}

\begin{proof}[Proof of \Cref{thm:constrained_k_center}]
\new{
We first apply the same probability bounds as in \Cref{sec:dimension_reduction_for_kcenter}.
Let $C^*\subset \R^d$ be a set of $k$ centers that are optimal for the vanilla variant, and let $S\subseteq P$ be an output of Gonzalez's algorithm~\cite{DBLP:journals/tcs/Gonzalez85} after $k+1$ steps, hence, $|S|=k+1$ and $\min_{s_1,s_2\in S} \|s_1-s_2\|\geq \opt_{vanilla}(P)$.
Let $\epsilon=\tfrac{1}{2}$, 
set $T=\Theta(\sqrt{d})$ with a hidden constant that satisfies \Cref{lem:ball_expansion_gaussians}, 
and set $t=O(\log k) + 64 \tfrac{T^2}{\alpha^2}$ 
such that the bound $2^{-\Omega(t)}$ in \Cref{fact:JL_lemma_one_vector}
is at most $\tfrac{1}{10(2k+1)^2}$.
    Let $G$ be a $t\times d$ matrix of iid Gaussians $N(0,\tfrac{1}{t})$.
    Then, by \Cref{fact:JL_lemma_one_vector} and a union bound,
\begin{equation}\label{eq:JL_constraints}
  \Pr\big[\forall p_1,p_2\in S\cup C^*,\; 
        \|G(p_1-p_2)\|\in [1\pm 0.5]\|p_1-p_2\| \big]
    \geq 1-(2k+1)^2 2^{-\Omega(t)} 
    \geq \tfrac{9}{10}.
\end{equation}
By \Cref{lem:ball_expansion_gaussians},
\begin{equation}\label{eq:expansion_constraints}
\Pr\big[\forall x\in B(\vec{0},1),\;  \|Gx\|\leq \tfrac{\alpha}{8} \big]\geq
    \Pr\big[\forall x\in B(\vec{0},1),\;  \|Gx\|\leq \tfrac{T}{\sqrt{t}} \big] 
  \geq  1-2^{-\Omega(d)}. 
\end{equation} 
Assume henceforth that the events in \Cref{eq:JL_constraints,eq:expansion_constraints} happen. These events will suffice for us to obtain \Cref{thm:constrained_k_center} simultaneously for all constraints. }

Consider a constraint $\AC$. 
By taking an optimal solution for $P$
and applying \Cref{eq:expansion_constraints}, we obtain
    \[
    \OPT_{\AC}(4G(P)) = 4 \OPT_{\AC}(G(P)) \leq \tfrac{\alpha}{2} \OPT_{\AC}(P).
    \]
We shall show that $\OPT_{\AC}(4G(P)) \geq \tfrac{1}{\alpha}\OPT_{\AC}(P)$;
it will then follow that $\OPT_{\AC}(4\alpha G(P))$ is an $\alpha^2$-approximation of $\OPT_{\AC}(P)$,
completing the proof for the map $G'=4\alpha G$ and approximation $\alpha'=\alpha^2$.
Towards this goal, recall an argument from \Cref{sec:dimension_reduction_for_kcenter}:
since $S\subseteq P$ is a witness set, we have by \Cref{eq:JL_constraints} that
    \begin{align*}
        \OPT_{vanilla}(4G(P)) 
&\geq \OPT_{vanilla}(P).
\end{align*}
    Therefore, if $\OPT_{\AC}(P) \leq \alpha \OPT_{vanilla}(P)$, then
    \[
    \OPT_{\AC}(4G(P)) \geq \OPT_{vanilla}(4G(P)) \geq \OPT_{vanilla}(P) \geq \tfrac{1}{\alpha}\OPT_{\AC}(P) ,
    \]
    as desired, and hence
    we assume from now on that $\OPT_{\AC}(P) > \alpha \OPT_{vanilla}(P)$.
    Let $P'$ be the multi-set obtained by moving every point $p\in P$ to its nearest center $c^*_p\in C^*$. 
    Clearly, every point moves distance at most $\OPT_{vanilla}(P)$.
    Additionally, $G(P')$ equals to the set obtained by moving every point $Gp\in G(P)$ to its ``projected center'' $Gc^*_p$, and by \Cref{eq:expansion_constraints}, 
    $\|Gp-Gc^*_p\| \leq \tfrac{\alpha}{8}\|p-c^*_p\|\leq \tfrac{\alpha}{8}\OPT_{vanilla}(P)$, i.e., every point in $4G(P)$ moves distance at most $\tfrac{\alpha}{2}\OPT_{vanilla}(P)$.
\begin{align*}
  \OPT_{\AC}(4G(P)) 
  &\geq \OPT_{\AC}(4G(P')) - \tfrac{\alpha}{2}\OPT_{vanilla}(P)
  & \text{by \Cref{lem:move_points_capacitated_k_center}}
  \\
  &\geq 2\OPT_{\AC}(P') - \tfrac{\alpha}{2}\OPT_{vanilla}(P)
  & \text{by \Cref{eq:JL_constraints}}
  \\
  &\geq 2\big(\OPT_{\AC}(P) - \OPT_{vanilla}(P)\big) - \tfrac{\alpha}{2}\OPT_{vanilla}(P)
  & \text{by \Cref{lem:move_points_capacitated_k_center}}
  \\
  &\geq 2\big(\OPT_{\AC}(P) - \tfrac1\alpha \OPT_{\AC}(P)\big) - \tfrac{1}{2}\OPT_{\AC}(P)
  & \text{by our assumption}
  \\
  &\geq \tfrac{1}{2}\OPT_{\AC}(P)
  & \text{since $\alpha\geq 2$.}    
\end{align*}
    This concludes the proof of the value estimation.

    \new{
    To obtain the approximation guarantee (second bullet),
    consider a feasible partition $\mathcal{P}=(P_1,\ldots,P_k)$ of $P$ into $k$ clusters,
    and $k$ centers $C=(c_1,\ldots,c_k)\subseteq P$. 
    Define $\cost(\mathcal{P},C)\coloneqq\max_{i\in [k]}\dist(c_i,P_i)$. We now carry a similar argument, to show that $\tfrac{1}{\alpha}\cost(\mathcal{P},C)\leq \cost(4G(\mathcal{P}),4G(C))\leq \tfrac{\alpha}{2}\cost(\mathcal{P},C)$, and the proof will be concluded by rescaling $G$.
    First, by \Cref{eq:expansion_constraints}, $\cost(4G(\mathcal{P}),4G(C))\leq \tfrac{\alpha}{2}\cost(\mathcal{P},C)$.

    Next, if $\cost(\mathcal{P},C)\leq \alpha\opt_{vanilla}(P)$, then
    \[
    \cost(4G(\mathcal{P}),4G(C))\geq \opt_{vanilla}(4G(P))\geq \opt_{vanilla}(P)\geq \tfrac{1}{\alpha}\cost(\mathcal{P},C),
    \]
    as desired, hence we assume $\cost(\mathcal{P},C)>\alpha\opt_{vanilla}(P)$. Denote by $C'\subseteq P'$ a multiset obtained by moving every point $c\in C$ to its nearest optimal center $c^*_c\in C^*$.
    As shown above, every point in $4G(C)$ moves at most $\tfrac{\alpha}{2}\opt_{vanilla}(P)$.
    Thus,
    \begin{align*}
        \cost(4G(\mathcal{P}),4G(C)) &\geq \cost(4G(\mathcal{P'}),4G(C'))- \alpha\opt_{vanilla}(P)&& \text{by the triangle inequality} \\
        &\geq 2\cost(\mathcal{P'},C')- \alpha\opt_{vanilla}(P) && \text{by \Cref{eq:JL_constraints}} \\
        &\geq 2(\cost(\mathcal{P},C)-2\opt_{vanilla}(P))- \alpha\opt_{vanilla}(P) && \text{by the triangle inequality} \\
        &\geq 2(\cost(\mathcal{P},C)-\tfrac{2}{\alpha}\cost(\mathcal{P},C))- \cost(\mathcal{P},C) && \text{by our assumption}\\
        &\geq \tfrac12 \cost(\mathcal{P},C) && \text{since $\alpha\geq 8$,}
    \end{align*}
    and the proof is concluded by rescaling $G$.
    }
\end{proof}

    \section{Dimension Reduction for $k$-Center in Doubling Sets}
\label{sec:doubling}

In this section, 
we prove \Cref{thm:doubling_informal}.
More formally, we prove the following for $k$-center \new{with an assignment constraint. The vanilla version follows as a special case, and so does the outliers variant, by setting $k'=k+z$ and restricting $z$ of the clusters to have only one point.}

\begin{theorem}[Dimension Reduction for Doubling Sets]
  \label{thm:doubling_diameter}
    Let $P\subset\R^d$ be a set of doubling dimension $\ddim(P)$,
    let $k\leq |P|$ and $0<\epsilon<1/2$,
    and suppose $G\in\R^{t\times d}$ is a matrix of iid Gaussians $N(0,\tfrac{1}{t})$
    for suitable $t=O(\tfrac{\log k}{\epsilon^2}+\tfrac{ \log(1/\epsilon)}{\epsilon^2}\ddim(P))$.
    Then with probability at least $2/3$, \new{for all constraints $\AC$,}
    \begin{itemize}
        \item the $k$-center \new{with assignment constraint $\AC$} value of $G(P)$ is a $(1+\epsilon)$-approximation for the $k$-center \new{with assignment constraint $\AC$} value  of $P$, and
        \item \new{centers $C\subseteq P$ with assignment $\pi$ are an $(\alpha(1+\epsilon))$-approximate $k$-center with assignment constraint $\AC$ solution of $P$ whenever $G(C)$ with assignment $\pi$ are an $\alpha$-approximate $k$-center with assignment constraint $\AC$ solution of $G(P)$.}
    \end{itemize}
\end{theorem}

\begin{corollary}[Doubling Sets with Outliers]
  \label{thm:doubling_with_outliers}
    Under the conditions of \Cref{thm:doubling_diameter}, if $z<|P|$, then for suitable $t=O(\tfrac{\log (kz)}{\epsilon^2}+\tfrac{ \log(1/\epsilon)}{\epsilon^2}\ddim(P))$, the conclusion of \Cref{thm:doubling_diameter} holds for $k$-center with $z$ outliers.
\end{corollary}

\new{
Two key components in the proof of \Cref{thm:doubling_diameter} are
a lemma of Indyk and Naor~\cite{DBLP:journals/talg/IndykN07} and the Kirszbraun Theorem~\cite{Kir34},
which we state next.
As usual, a map $\phi:X\to Y$ is called \emph{$L$-Lipschitz}
(for $L\ge 1$ and subsets $X,Y\subseteq \R^d$ endowed with the $\ell_2$-norm)
if for all $x_1,x_2\in X$ we have $\|\phi(x_1)-\phi(x_2)\|\leq L\|x_1-x_2\|$.
}

\begin{lemma}[Lemma 4.2 in~\cite{DBLP:journals/talg/IndykN07}]
\label{lem:IN07_doubling_radius}
Let $X\subset B(\vec{0},1)$ be a subset of the Euclidean unit ball.
There are absolute constants $c,C>0$ such that for $t>C \ddim(X) +1$, and a matrix $G\in\R^{t\times d}$ of iid Gaussians $N(0,\tfrac{1}{t})$,
\[
\Pr (\exists x\in X, \|Gx\| >6) \leq e^{-ct}.
\]
\end{lemma}

\new{
\begin{theorem}[Kirszbraun Theorem~\cite{Kir34}]
\label{thm:kirszbraun}
For every subset $X\subset \R^t$ and an $L$-Lipschitz map $\phi:X\to \R^d$, there exists an $L$-Lipschitz extension $\tilde{\phi}$ of $\phi$ to the entire space $\R^t$.
\end{theorem}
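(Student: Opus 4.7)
The plan is to prove the Kirszbraun extension theorem by the classical three-step reduction, ending in a finite convex-geometric lemma. First, by replacing $\phi$ with $\phi/L$, I may assume $L=1$. Second, applying Zorn's lemma to the poset of $1$-Lipschitz extensions of $\phi$ (ordered by inclusion of domains) reduces the problem to a \emph{one-point extension}: for any $1$-Lipschitz $\phi:X\to\R^d$ and any $x_0\in\R^t\setminus X$, find $y_0\in\R^d$ such that appending $(x_0,y_0)$ preserves the $1$-Lipschitz property. Equivalently, the intersection $\bigcap_{x\in X}\overline{B}(\phi(x),\|x-x_0\|)$ must be nonempty. Since each closed ball is compact after intersecting with any single fixed ball, the finite intersection property reduces this further to finite index sets.

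Thus the problem boils down to the following finite lemma, which is the technical heart: given $x_1,\ldots,x_n\in\R^t$ and $y_1,\ldots,y_n\in\R^d$ with $\|y_i-y_j\|\le \|x_i-x_j\|$ for all $i,j$, and given $x_0\in\R^t$, there exists $y_0\in\R^d$ with $\|y_0-y_i\|\le \|x_0-x_i\|$ for all $i$. I would prove it by a variational argument: define
\[
\lambda^{*} := \inf\bigl\{\lambda\ge 0 \;:\; \textstyle\bigcap_i \overline{B}(y_i,\lambda\|x_0-x_i\|)\ne \emptyset\bigr\}.
\]
By compactness the infimum is attained, and by strict convexity of Euclidean balls the minimizing intersection is a single point $y^{*}$. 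The goal is to show $\lambda^{*}\le 1$.

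Assume toward contradiction $\lambda^{*}>1$, and let $I\subseteq [n]$ be the set of \emph{active} indices at which $\|y^{*}-y_i\|=\lambda^{*}\|x_0-x_i\|$. A first-order perturbation argument (any direction not separated from $\{y_i\}_{i\in I}$ by a hyperplane through $y^{*}$ would strictly decrease every active distance, contradicting optimality) shows that $y^{*}$ lies in the convex hull of $\{y_i:i\in I\}$; write $y^{*}=\sum_{i\in I}\alpha_i y_i$ with $\alpha_i\ge 0$, $\sum \alpha_i=1$, and set $\tilde x:=\sum_{i\in I}\alpha_i x_i$. The standard centroid identities
\[
\sum_{i,j\in I}\alpha_i\alpha_j\|v_i-v_j\|^2=2\sum_{i\in I}\alpha_i\|\bar v-v_i\|^2, \qquad \sum_{i\in I}\alpha_i\|u-v_i\|^2=\|u-\bar v\|^2+\sum_{i\in I}\alpha_i\|\bar v-v_i\|^2,
\]
(where $\bar v=\sum_i\alpha_i v_i$) applied first to $\{y_i\}$ with centroid $y^{*}$ and then to $\{x_i\}$ with centroid $\tilde x$ and base point $u=x_0$, combined with the Lipschitz hypothesis $\|y_i-y_j\|\le \|x_i-x_j\|$ and the activity equalities, yield
\[
\sum_{i\in I}\alpha_i\|\tilde x-x_i\|^2 \;\ge\; \lambda^{*2}\Bigl(\|x_0-\tilde x\|^2+\sum_{i\in I}\alpha_i\|\tilde x-x_i\|^2\Bigr).
\]
Since $\lambda^{*}>1$, both quantities $\sum_{i}\alpha_i\|\tilde x-x_i\|^2$ and $\|x_0-\tilde x\|^2$ must vanish; this collapses all active $x_i$ to $x_0$, and the Lipschitz hypothesis then collapses all active $y_i$ to a single point, forcing $\lambda^{*}\|x_0-x_i\|=0$ and contradicting $\lambda^{*}>1$. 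Hence $\lambda^{*}\le 1$, proving the finite lemma and completing the argument.

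I expect the main obstacle to be the convex-hull claim for $y^{*}$: while geometrically natural, making the perturbation/separation argument fully rigorous requires a careful separating-hyperplane step, and also care in handling the edge case where some $\|x_0-x_i\|=0$. The remaining pieces---the centroid identities, the final chain of inequalities, and the Zorn plus finite-intersection-property reductions---are routine.
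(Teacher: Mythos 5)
The paper does not prove this statement at all: Kirszbraun's theorem is imported as a classical black-box result, cited to [Kir34] and used only as a tool in the proofs of the doubling-dimension theorems, so there is no in-paper argument to compare yours against. Your proposal is the standard classical proof (scaling to $L=1$, Zorn's lemma to reduce to a one-point extension, the finite intersection property of compact balls to reduce to finitely many constraints, and the finite Helly-type lemma settled by a variational argument with the convex-hull/separating-hyperplane step and the centroid identities), and it is essentially correct. Two small points to tighten: (i) uniqueness of $y^{*}$ via strict convexity is not actually needed -- existence of a minimizer for $\max_i \|y-y_i\|/\|x_0-x_i\|$ plus the perturbation argument suffices, and the indices with $x_i=x_0$ should be split off at the start (there the choice $y_0=y_i$ already works by the Lipschitz hypothesis); (ii) the final sentence of your contradiction is phrased loosely: the collapse $\tilde x=x_0$ and $x_i=x_0$ for the positively weighted active indices does not by itself contradict $\lambda^{*}>1$; rather, it puts you in the edge case just mentioned, where $y_0=y_i$ witnesses $\lambda^{*}\le 1$, which is the actual contradiction. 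With those repairs the argument is a complete and faithful proof of the theorem as stated.
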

}

\begin{proof}[Proof of \Cref{thm:doubling_diameter}]
Let $D=\opt_{vanilla}(P)$ be the vanilla $k$-center value of $P$.
Consider an $(\epsilon D)$-net for the set $P$,
i.e., a subset $Y\subseteq P$ such that for every $p\in P$ there exists $y\in Y$ satisfying $\|p-y\|\leq \epsilon D$.
By a standard argument (see~\cite{GKL03}),
there is such a net of size $\leq k(2/\epsilon)^{\ddim(P)}$. 
We briefly explain this argument for completeness.
Let $I_0\subseteq P$ be an optimal set of discrete centers for vanilla $k$-center of $P$,
then clearly $P\subseteq \bigcup_{x\in I_0} B(x,2D)$.
By the doubling assumption,
there exists $I_1\subseteq P$ of size $|I_1|\leq 2^{\ddim(P)} |I_0|$
such that $P\subseteq \bigcup_{x\in I_1} B(x,D)$.
Repeat this argument inductively for $\lceil\log\tfrac{2}{\epsilon}\rceil$ levels,
to cover $P$ with balls of radius at most $\epsilon D$
and obtain a set $Y\subseteq P$ of size
$|Y| =  |I_{\lceil\log\tfrac{2}{\epsilon}\rceil}| \leq (2^{\ddim(P)})^{\lceil\log\tfrac{2}{\epsilon}\rceil}|I_0|\leq k(4/\epsilon)^{\ddim(P)}$.

For suitable dimension $t=O(\epsilon^{-2}\log |Y|) = O(\epsilon^{-2}(\log(1/\epsilon) \ddim(P) + \log k))$,
by \Cref{fact:JL_lemma_one_vector} and a union bound,
with probability at least $8/9$, 
\begin{equation}\label{eq:doubling_JL_net}
    \forall y_1,y_2\in Y, \qquad
\|Gy_1-Gy_2\| \in (1\pm \epsilon)\|y_1-y_2\|.
\end{equation}
In addition, for each $y\in Y$ separately,
by \Cref{lem:IN07_doubling_radius}, with probability $1-e^{-ct}$,
\begin{equation}\label{eq:doubling_ball_no_expansion}
  \forall x\in P\cap B(y,\epsilon D),
  \qquad
  \|G(x-y)\| \leq 6\epsilon D.
\end{equation}
By a union bound, \Cref{eq:doubling_ball_no_expansion} holds for all $y\in Y$ with probability $1-e^{-ct}k(2/\epsilon)^{\ddim(P)}\geq 8/9$.
Let us assume henceforth that all these events hold simultaneously,
which occurs with probability at least $2/3$ by a union bound.

\new{
Consider a constraint $\AC$,
and let us first show that $\opt_{\AC}(G(P)) \leq (1+O(\epsilon)) \opt_{\AC}(P)$.
\new{By \Cref{eq:doubling_JL_net}, the map $G:Y\to \R^t$ is $(1+\eps)$-Lipschitz, and by \Cref{thm:kirszbraun}, it has an extension $\phi_1:\R^d\to\R^t$ that is $(1+\eps)$-Lipschitz.
Let $C^* = \{c^*_1, \dots, c^*_k\} \subset \mathbb{R}^d$ be an optimal set of $k$ centers for $P$ under assignment constraint $\AC$, 
and let $\pi^*$ be the corresponding feasible assignment.}
By using the same assignment $\pi^*$ but for points $G(P)$ and centers $\phi(C^*)$, we get
\begin{align*}
    \opt_{\AC}(G(P)) 
    &\leq \max_{p_i\in P} \|Gp_i-\phi_{1}(c^*_{\pi^*(i)})\| \\
\shortintertext{\text{and denoting by $y_p\in Y$ the nearest point to $p$ in $Y$,}} 
&\leq \max_{p_i\in P} \big[ \|Gy_{p_i}-\phi_{1}(c^*_{\pi^*(i)})\| + \|Gp_i-Gy_{p_i}\| \big] && \text{by the triangle inequality}\\ 
&\leq \max_{p_i\in P} \|Gy_{p_i}-\phi_{1}(c^*_{\pi^*(i)})\| + 6\epsilon D && \text{by \Cref{eq:doubling_ball_no_expansion}} \\
&\leq (1+\epsilon) \max_{p_i\in P} \|y_{p_i}-c^*_{\pi^*(i)}\| + 6\epsilon D && \text{by Kirszbraun's Theorem} \\
&\leq (1+\epsilon) \max_{p_i\in P} (\|y_{p_i}- p_i\|+ \|p_i - c^*_{\pi^*(i)}\|) + 6\epsilon D && \text{by the triangle inequality} \\
    &\leq (1+\epsilon) \max_{p_i\in P} \|p_i-c^*_{\pi^*(i)}\|  + 8\epsilon D  && \text{since $y_{p_i}$ is a nearest net point to $p_i$} \\
    &\leq (1+9\epsilon)\opt_{\AC}(P). 
\end{align*}

We can also show that $\opt_{\AC}(G(P))\geq (1-O(\epsilon))\opt_{\AC}(P)$. 
Indeed, the inverse map of $G:Y\to \R^t$,
i.e., mapping $Gy \mapsto y$ for all $y \in Y$, is $(1+\epsilon)$-Lipschitz by \Cref{eq:doubling_JL_net}. Thus, by the Kirszbraun Theorem (\Cref{thm:kirszbraun})
it has an extension $\phi_{-1}:\R^t\to\R^d$ that is $(1+\epsilon)$-Lipschitz. 
Let $\tilde{C}\in\R^t$ be an optimal set of $k$ centers for $G(P)$ under assignment constraint $\AC$,
and let $\tilde{\pi}$ be the corresponding feasible assignment. Then
\begin{align*}
    \opt_{\AC}(P) &\leq \max_{p_i\in P} \|(p_i-\phi_{-1}(\tilde{c}_{\tilde{\pi}(i)})) \\
    &\leq \max_{p_i\in P} \|y_{p_i}-\phi_{-1}(\tilde{c}_{\tilde{\pi}(i)})\| + \epsilon D && \text{by the triangle inequality} \\
    &\leq (1+\epsilon)\max_{p_i\in P} \|Gy_{p_i}-\tilde{c}_{\tilde{\pi}(i)}\| + \epsilon D && \text{by Kirszbraun's Theorem} \\
    &\leq (1+\epsilon)\max_{p_i\in P}(\|Gp_i-Gy_{p_i}\| + \|Gp_i-\tilde{c}_{\tilde{\pi}(i)}\|) + \epsilon D && \text{by the triangle inequality} \\
    &\leq (1+\epsilon)\opt_{\AC}(G(P)) + 13\epsilon D && \text{by \Cref{eq:doubling_ball_no_expansion} and since $\eps<1$} \\
    &\leq (1+O(\epsilon)) \opt_{\AC}(G(P)).
\end{align*}
The first bullet in \Cref{thm:doubling_diameter}
now follows by rescaling $\epsilon$.}
\new{For the second bullet, let $C=\{c_1,\ldots,c_k\}\subseteq P$ be a set of centers, $\pi$ be a feasible assignment, and suppose that $G(C)$ and $\pi$ are an $\alpha$-approximation to $k$-center with assignment constraint $\AC$ of $G(P)$.
Then, the cost of $C$ and $\pi$ with respect 
$P$ is,
\begin{align*}
    &\max_{p_i\in P} \|p_i-c_{\pi(i)}\|\\ &\leq \max_{p_i\in P} \|p_i-y_{p_i}\|+\|y_{p_i}-y_{c_{\pi(i)}}\|+\|y_{c_{\pi(i)}}-c_{\pi(i)}\| && \text{by the triangle inequality} \\
    &\leq 2\eps D + (1+O(\eps)) \max_{p_i\in P} \|Gy_{p_i}-Gy_{c_{\pi(i)}}\| && \text{by \Cref{eq:doubling_JL_net}} \\
    &\leq O(\eps D) + (1+O(\eps)) \max_{p_i\in P} \|Gp_i-Gc_{\pi(i)}\| && \text{by triangle inequality and \Cref{eq:doubling_ball_no_expansion}} \\
    &\leq (1+O(\eps))\alpha \opt_\AC(G(P)),
\end{align*}
and the proof is concluded by rescaling $\eps$.}
\end{proof}

\new{

\subsection{The JL Lemma and $k$-Center}
\label{sec:JL_preserves_constrained_kcenter}

In this section, we obtain dimension reduction for $k$-center using target dimension $O(\eps^{-2}\log n)$.
Observe that since $\ddim(P)\leq \log n$,
\Cref{thm:doubling_diameter} readily implies
a slightly weaker bound $O(\eps^{-2}\log \tfrac{1}{\eps} \cdot\log n )$.
We remove the extra $\log\tfrac{1}{\eps}$ factor by a simpler proof
that uses the same arguments as \Cref{thm:doubling_diameter},
namely, it applies the JL Lemma on $P$
and then extends the two resulting maps using the Kirszbraun Theorem,
but without using nets.
Similarly to \Cref{thm:doubling_diameter}, the result holds for
the $k$-center with an assignment constraint problem (simultaneously for all constraints).

\begin{theorem}\label{thm:JL_preserves_constrained_kcenter}
Let $P \subset \mathbb{R}^d$ be a set of $n$ points. Let $G: \mathbb{R}^d \to \mathbb{R}^t$ be a JL mapping with $t = O(\epsilon^{-2} \log n)$. Then with high probability, for all constraints $\AC:[k]^n\to \{0,1\}$,
\begin{itemize}
    \item $(1-\epsilon) \opt_{\AC}(P) \le \opt_{\AC}(G(P)) \le (1+\epsilon) \opt_{\AC}(P)$, and
    \item centers $C\subseteq P$ with assignment $\pi$ are $(\alpha(1+\epsilon))$-approximate solution for $k$-center with assignment constraint $\AC$ of $P$ whenever $G(C)$ and $\pi$ are $\alpha$-approximate solution for $k$-center with assignment constraint $\AC$ of $G(P)$.
\end{itemize}
\end{theorem}

\begin{proof}
By the JL Lemma, with high probability, the map $G$ preserves all pairwise distances in $P$ within $(1+\eps)$-factor.
Observe that $G:P\to \R^t$ is $(1+\eps)$-Lipschitz and $G^{-1}:G(P)\to \R^d$ is $(1+O(\eps))$-Lipschitz, and thus by Kirszbraun Theorem (\Cref{thm:kirszbraun}), $G$ has a $(1+\eps)$-Lipschitz extension $\phi_1:\R^d\to\R^t$, and $G^{-1}$ has a  $(1+O(\eps))$-Lipschitz extension $\phi_{-1}:\R^t\to\R^d$.
Consider a constraint $\AC$.
Let $C^* = \{c^*_1, \dots, c^*_k\} \subset \mathbb{R}^d$ and $\pi^*$ be optimal centers and feasible assignment for $P$.

For the upper bound on $\opt_{\AC}(G(P))$,
we can use the centers $\phi_1(C^*) = \{\phi_1(c^*_1), \dots, \phi_1(c^*_k)\}\subset\mathbb{R}^t$, and the optimal assignment $\pi^*$ applied now to $G(P)$. We have,
\begin{align*}
    \opt_{\AC}(G(P)) &\leq \max_{p_i \in P} \|Gp_i - \phi_1(c^*_{\pi^*(i)})\| \\
     &\leq (1+\epsilon) \max_{p_i \in P}\|p_i - c^*_{\pi^*(i)}\| && \text{$\phi_1$ is $(1+\eps)$-Lipschitz} \\
    &= (1+\epsilon) \opt_{\AC}(P).
\end{align*}
For the lower bound on $\opt_{\AC}(G(P))$, let $\tilde{C} = \{\tilde{c}_1, \dots, \tilde{c}_k\} \subset \mathbb{R}^t$ and $\tilde{\pi}$ be optimal centers and feasible assignment for $G(P)$, which we can apply also to $P$. We have,
\begin{align*}
    \opt_{\AC}(P) &\leq \max_{p_i \in P} \|p_i - \phi_{-1}(\tilde{c}_{\pi^*(i)})\| \\
     &\leq (1+O(\epsilon)) \max_{p_i \in P}\|Gp_i - \tilde{c}_{\pi^*(i)}\| && \text{$\phi_{-1}$ is $(1+O(\eps))$-Lipschitz} \\
    &= (1+O(\epsilon)) \opt_{\AC}(G(P)),
\end{align*}
concluding the first bullet of \Cref{thm:JL_preserves_constrained_kcenter}.
For the second bullet, let $C=\{c_1,\ldots,c_k\}\subseteq P$ be a set of centers, $\pi$ be a feasible assignment, and suppose that $G(C)$ and $\pi$ are an $\alpha$-approximation to $k$-center with assignment constraint $\AC$ of $G(P)$.
Then, the cost of $C$ and $\pi$ with respect to $P$ is,
\begin{align*}
    \max_{p_i\in P} \|p_i-c_{\pi(i)}\| &\leq (1+O(\eps)) \max_{p_i\in P} \|Gp_i-Gc_{\pi(i)}\| && \text{by the JL Lemma} \\
    &\leq (1+O(\eps))\alpha \opt_\AC(G(P))\\
    &\leq (1+O(\eps))\alpha \opt_\AC(P).
\end{align*}
Rescaling $\eps$ concludes the proof.

\end{proof}

}

     \new{

\section{Streaming Algorithm for $k$-Center}\label{sec:streaming_low_dim}

In this section, we prove \Cref{thm:streaming_informal}
by presenting streaming algorithms  for all $k$-center variants that we consider.
We first state a streaming algorithm for low-dimensional spaces (\Cref{thm:low_dim_stream}), 
and then use it together with our dimension reduction to prove \Cref{thm:streaming_informal}. 

The streaming algorithm for low dimension actually works by finding
a weighted set of points $S$, such that for all constraints $\AC$,
we have $\opt_{\AC}(S)=\Theta(\opt_{\AC}(P))$.
Such a set can be used, by employing any offline algorithm,
to approximate $\opt_{\AC}(P)$ and to compute an approximate solution.
(We note that one can easily strengthen the proof below to obtain $1+\eps$ approximation.)
The same algorithm already appears in \cite{BergBM23},
with a technical but crucial difference: 
Their algorithm reports grid points near $P$,
while our algorithm reports data points (i.e., $S\subseteq P$),
which is needed to lift solutions back to high dimension.
We achieve this by simply replacing their use of simple $\ell_0$-samplers
with two-level $\ell_0$-samplers. 
We also observe that their argument extends to $k$-center with assignment constraints.
The following theorem summarizes these guarantees, 
and for completeness we prove it in \Cref{sec:proof_stream_low_dim}.

\begin{theorem}\label{thm:low_dim_stream}
Let $n,d$ and $\Delta = \poly(n)$.
There exists a randomized single-pass streaming algorithm that, given a set $P \subseteq [\Delta]^d$ of $n$ points, presented as a stream of insertions and deletions of length $\poly(n)$, returns 
a weighted set $S\subseteq P$ with positive weights $\{w_p\}_{p\in S}$, representing a multiset constructed by moving every point in $P$ at most distance $O(\opt_{vanilla}(P))$.
The algorithm uses $k\cdot 2^{O(d)} \cdot \poly(\log n)$ bits of space 
and fails with probability at most $\tfrac{1}{\poly(n)}$.
\end{theorem}

If for a constraint $\AC$, we can maintain sufficient information about the multiset representation, then we obtain by \Cref{lem:move_points_capacitated_k_center} that $\opt_{\AC}(S)=\Theta(\opt_{\AC}(P))$.
For the vanilla, outliers and capacitated variants, it suffices to maintain $S$ as a weighted set, and for the fair variant, where every point $p\in P$ has color $c(p)\in [\ell]$, it suffices to maintain $\ell$ weighted sets, one set for each color.
Clearly, this is not possible for all constraints, and 
just storing an arbitrary constraint $\AC$ is infeasible (since it might require $k^n$ bits of space).
Eventually, one need a small space offline algorithm to actually approximate $\opt_{\AC}(P)$.

\paragraph{Vanilla $k$-center}
A streaming algorithm for vanilla $k$-center in high dimension now follows as a corollary.
First, discretize the matrix $G$ of Gaussians using $O(\log \Delta)=O(\log n)$ bits for each entry.
This has the effect of distorting $\|Gx-Gy\|$ additively by at most $\tfrac{\|x-y\|}{\poly(\Delta)}$,
which is negligible for $x,y\in [\Delta]^d$. 
Now, just apply on the input stream the dimension reduction of \Cref{thm:dimension_reduction_for_kcenter}, 
and feed the result into the streaming algorithm of \Cref{thm:low_dim_stream}.
The algorithm for dimension $t=O(d/\alpha^2)$ uses $\poly(k 2^{d/\alpha^2}\log n)$ bits of space,
and storing the discretized matrix $G$ requires $O(d^2\log n)$ bits.
It remains to apply on the output $S$ of \Cref{thm:low_dim_stream} a small space offline algorithm for vanilla $k$-center, like Gonzalez's algorithm~\cite{DBLP:journals/tcs/Gonzalez85}.

\begin{corollary} [Streaming Vanilla $k$-Center]
  \label{thm:streaming_vanilla}
  There is a randomized algorithm that,
  given as input numbers $\alpha,k,n,d,\Delta$
  and a set $P\subseteq [\Delta]^d$ of size at most $n$ presented as a stream of $\poly(n)$ insertions and deletions of points,
  returns an $O(\alpha)$-approximation (value and solution) to $k$-center on $P$.
  The algorithm uses 
  $\poly(k d 2^{d/\alpha^2}\log (n\Delta))$
  bits of space and fails with probability at most $1/ \poly(n)$.
\end{corollary}

\paragraph{Outliers Variant. }
Recall that $k$-center with $z$ outliers is a special case of $(k+z)$-center with the constraint that $z$ clusters have only one point, thus \Cref{thm:low_dim_stream} applies to $k$-center with $z$ outliers. Combining with an offline $O(1)$-approximation by \cite{DBLP:conf/soda/CharikarKMN01},
we obtain an algorithm for low dimensions using $(k+z) \cdot 2^{O(d)} \cdot \poly(\log n)$ bits of space. Thus, by first applying \Cref{thm:outliers_k_center} using a discretized matrix $G$ and then using the above, we obtain the following.

\begin{corollary} [Streaming Algorithm for $k$-Center with $z$ Outliers]
  \label{thm:streaming_outliers}
  There is a randomized algorithm that,
  given as input numbers $\alpha,k,z,n,d,\Delta$
  and a set $P\subseteq [\Delta]^d$ of size at most $n$ presented as a stream of $\poly(n)$ insertions and deletions of points,
  returns an $O(\alpha)$-approximation (value and solution) to $k$-center with $z$ outliers on $P$.
  The algorithm uses 
  $\poly(k (z+1)d 2^{d/\alpha^2}\log (n\Delta))$
  bits of space and fails with probability at most $1/ \poly(n)$.
\end{corollary}
Note that our space bound $(k+z) \cdot 2^{O(d)} \cdot \poly(\log n)$ for low dimensions
is worse than the $(k \cdot 2^{O(d)}+z) \cdot \poly(\log n)$ bound obtained in \cite{BergBM23}.
It nevertheless suffices for our application to high dimension,
because the target dimension $t$ in \Cref{thm:outliers_k_center} is linear in $\log (kz)$,
and thus our space bound for high dimension in \Cref{thm:streaming_outliers} 
has a $\poly(kz)$ factor anyway.

\paragraph{Capacitated and Fair Variants.}
As corollaries of \Cref{thm:constrained_k_center}, we get streaming algorithms for the capacitated and fair variants.
The capacitated variant, introduced by \cite{Bar-IlanKP93},
is a special case of $k$-center with an assignment constraint,
where the input is a set $P\subset\R^d$ of $n$ points and a maximum load $L>0$,
and the assignment constraint is that every cluster has at most $L$ points.
Thus, by applying the dimension reduction of \Cref{thm:constrained_k_center} using a discretized matrix $G$, using the streaming algorithm for low dimension of \Cref{thm:low_dim_stream} and using an offline $O(1)$-approximation by \cite{Bar-IlanKP93},

\begin{corollary}\label{thm:streaming_capacitated}
    There is a randomized algorithm that, given as input numbers $\alpha,L,k,n,d,\Delta$
  and a set $P\subseteq [\Delta]^d$ of size at most $n$ presented as a stream of $\poly(n)$ insertions and deletions of points,
  returns an $O(\alpha)$-approximation (value and a set of centers) to capacitated $k$-center on $P$.
  The algorithm uses 
  $\poly(k d 2^{d/\alpha}\log n)$
  bits of space and fails with probability at most $1/ \poly(n)$.
\end{corollary}

The fair variant is a special case of $k$-center with an assignment constraint,
and we use the specific formulation introduced in~\cite{DBLP:conf/waoa/SchmidtSS19}.\footnote{For simplicity, we present our result in the model of~\cite{DBLP:conf/waoa/SchmidtSS19},
  although our approach yields a similar result also in the more sophisticated formulation introduced in~\cite{DBLP:conf/nips/HuangJV19},
  where each point may be assigned to multiple colors.
}
The input is a set $P\subset\R^d$ of $n$ points, a coloring $c:P\to \{1,...,\ell\}$, and two numbers $0<a<1<b$.
For all $i\in[\ell]$, define $\zeta(i)= \tfrac{|\{p\in P : c(p)=i\}|}{|P|}$ as the fraction that color $i$ appears in the input $P$.
The assignment constraint is that in every cluster $C$, for every color $i\in [\ell]$, 
\[
a \cdot\zeta(i)\leq \frac{|\{p\in C : c(p)=i\}|}{|C|}\leq b \cdot\zeta(i).
\]

By \Cref{thm:low_dim_stream,thm:constrained_k_center},  and an offline algorithm by \cite{Bercea0KKRS019},
\begin{corollary}\label{thm:streaming_fairness}
    For every fixed $\ell\geq 1$, there is a randomized algorithm that, given as input $\alpha>1,a,b,k,n,d,\Delta$
  and set $P\subseteq [\Delta]^d$ of size at most $n$ presented as a stream of $\poly(n)$ insertions and deletions of points where every point is given with a color from $[\ell]$,
  returns an $O(\alpha)$-approximation (value and a set of centers) to fair $k$-center on $P$.
  The algorithm uses 
  $\ell \poly(k d 2^{d/\alpha}\log n)$
  bits of space and fails with probability at most $1/ \poly(n)$.
\end{corollary}

\subsection{Proof of \Cref{thm:low_dim_stream}}\label{sec:proof_stream_low_dim}

Our streaming implementation relies on two known algorithms:
(a) two-level $\ell_0$-sampler, an extension of the standard $\ell_0$-sampler; and
(b) an algorithm for estimating the number of non-zero frequencies (aka $\ell_0$-norm) in a dynamic stream.
We state these known results, and then prove \Cref{thm:low_dim_stream}.

\begin{lemma}[{Two-Level $\ell_0$-Sampler, \cite[Lemma 3.3]{DBLP_arxiv:CJK+22}}]
\label{lem:2level_ell0_sampler}
There is a randomized algorithm,
that given as input a matrix $M\in\R^{m \times n}$,
with $m\le n$ and integer entries bounded by $\poly(n)$,
that is presented as a stream of additive entry-wise updates,
returns an index-pair $(i,j)$ of $M$,
where $i$ is chosen uniformly at random from the non-zero rows,
and then $j$ is chosen uniformly at random from the columns that are non-zero  in that row $i$.
The algorithm uses space $\poly(\log n)$,
fails with probability at most $1 / \poly(n)$,
and can further report the corresponding row-sum $\sum_{j'} M_{i,j'}$.
\end{lemma}

\begin{lemma}[$\ell_0$-estimation \cite{KaneNW10}]
There is a randomized algorithm,
    that given as input a vector $x\in\R^n$ with integer entries bounded by $\poly(n)$,
    that is presented as a stream of additive entry-wise updates, returns a $1.1$-approximation to $\|x\|_0=|\{i\in [n] : x_i\neq 0\}|$.
    The algorithm uses space $\poly(\log n)$ and
    fails with probability at most $1 / \poly(n)$.
\end{lemma}

\begin{proof}[Proof of \Cref{thm:low_dim_stream}.]
Let $R^*$ be the optimal vanilla $k$-center value for the point set $P$. 
Our algorithm maintains a set of guesses for $R^*$. Because the points reside in $[\Delta]^d$ with $\Delta = \poly(n)$, the maximum possible distance is bounded by $\sqrt{d}\Delta$. Define a set of geometric guesses $\mathcal{R} = \{R_0, R_1, \dots, R_s\}$ such that $R_i = 2^i R_0$. Setting $R_0 = 1$ and $R_s \ge \sqrt{d}\Delta$ yields $s= O(\log(d\Delta)) = O(\log n)$ guesses.

\paragraph{Grid-Based Representatives via Two-Level Sampling.}
For every guess $R \in \mathcal{R}$, our algorithm implicitly partitions the space $\mathbb{R}^d$ into a grid $\GG_R$ where each grid cell is a $d$-dimensional cube of sidelength $\delta = \eps_0 R / \sqrt{d}$, for $\eps_0=0.1$. To maintain representatives in a dynamic stream, 
the algorithm uses two-level $\ell_0$-samplers \Cref{lem:2level_ell0_sampler}, as follows.
For every guess $R$, maintain $k\cdot 2^{O(d)}\log n$ independent two-level $\ell_0$-samplers over a $|\GG_R|\times [\Delta]^d$ matrix. Upon insertion of a point $p\in [\Delta]^d$, compute the grid cell $\GG_R(p)$ that contains $p$,
and insert $(\GG_R(p),p)$ to all the two-level samplers of guess $R$. Similarly, upon deletion of a point $p$, delete $(\GG_R(p),p)$ from the samplers.

The idea is that the samplers first pick a uniformly random non-empty grid cell, and then picks a point from that sampled cell and is given as weight the number of points in the cell. When $R\geq R^*$, maintaining multiple samplers suffices to recover the set of non-empty cells, as follows.

\begin{claim}\label{claim:packing_grid_ball}
    For every $R \ge R^*$, the number of non-empty grid cells of $\GG_R$ is at most $k\cdot2^{O(d)}$.
\end{claim}
\begin{proof}
    Recall that the entire point set $P$ can be covered by $k$ Euclidean balls of radius $R^* \le R$. We now bound the number of grid cells of $\GG_R$ intersecting a ball of radius $R$. 
We use a volume argument comparing a Euclidean $R$-ball to a cube of sidelength $\eps_0 R / \sqrt{d}$. The volume of a $d$-dimensional $R$-ball is given by:
    \[
    \mathrm{Vol}(\mathcal{B}_d(R)) = \frac{\pi^{d/2}R^d}{\Gamma(d/2 + 1)}, \]
    where $\Gamma$ is the gamma function, which extends the factorial to non-natural numbers.
    The volume of a single grid cube is:
    \[
    \mathrm{Vol}\left(\mathrm{Cube}\left(\tfrac{\eps_0 R}{\sqrt{d}}\right)\right) = \left(\frac{\eps_0 R}{\sqrt{d}}\right)^d = \frac{\eps_0^d R^d}{d^{d/2}}.
    \]
    Now, we bound the number of cells intersecting an $R$-ball. These cells are disjoint, and since the diameter of a cell is $\leq \eps_0 R$, these cells are contained in a larger ball of radius $R + \eps_0 R \le 2R$.
    Therefore, since $\Gamma(x)=\Theta(x^{x+1/2}e^{-x})$ for $x>1$ (see e.g. \cite{Batir08}), the number of these cells is at most
    \begin{align*}
        \frac{\mathrm{Vol}(\mathcal{B}_d(2R))}{\mathrm{Vol}(\mathrm{Cube}(\eps_0 R/\sqrt{d}))} 
        &= \frac{2^d \cdot \pi^{d/2} / \Gamma(d/2 + 1)}{\eps_0^d / d^{d/2}}\\
        &\leq O(1)^d\frac{d^{d/2}}{d^{d/2+3/2}}\leq O(1)^d.
    \end{align*}
    Since $P$ is covered by $k$ balls of radius $R\ge R^*$, the total number of non-empty grid cells of $\GG_R$ is at most $M = k \cdot 2^{O(d)}$,
    proving the claim. 
\end{proof}

\paragraph{Extracting the Solution.}
The algorithm maintains also for each guess 
an $\ell_0$-norm estimator that estimates the final number of non-empty cells.
At the end of the stream, use the $\ell_0$-norm estimators to identify the smallest guess $\hat{R} \in \mathcal{R}$ whose estimated number of non-empty cells is at most $1.1\cdot k \cdot 2^{O(d)}$,
and output the set of points returned by the two-level samplers at this guess,
denoted by $S \subseteq P$.
Give each $p\in S$ weight $w_p$ that is the row-sum computed by the two-level sampler (which is the number of points in $p$'s cell),
and consider each cell at most once by omitting its repetitions.

Now, we argue that with high probability, this weighted set $S$ satisfies \Cref{thm:low_dim_stream}.
For sake of analysis, fix a guess $R'$ in the range $(R^*,2R^*]$,
e.g., the smallest guess that is larger than the optimum $R^*$.
By \Cref{claim:packing_grid_ball}, $\|\GG_{R'}\|_0\leq k \cdot 2^{O(d)}$, hence with high probability, its $\ell_0$-norm estimator is at most $1.1\cdot k \cdot 2^{O(d)}$,
and in this case we must have $\hat{R}\leq R' \leq 2R^*$.
By a coupon collector's argument, with high probability, the number $k  2^{O(d)} \log n$ of samplers is sufficient to recover a point from every non-empty cell of $\GG_{\hat{R}}$. 
By our grid construction, every point $p \in P$ is in the same cell as some representative $s \in S$, meaning the distance between $p$ and $s$ is at most the diameter of the cell, which is $\sqrt{d} \cdot (\eps_0 \hat{R} / \sqrt{d}) = \eps_0 \hat{R}$. 
Moreover, the weight given to $s$ equals the number of points $p\in P$ represented by $s$. 
For all constraints $\AC$, we have that $\opt_{\AC}(P)\geq \opt_{vanilla}(P)=R^*$.
By the ``movement'' argument of \Cref{lem:move_points_capacitated_k_center}, we obtain $\opt_{\AC}(P)\in (1\pm 2\eps_0)\opt_{\AC}(S)$, as desired.

\paragraph{Space.}
Each sampler and $\ell_0$-norm estimator require $\poly(\log n)$ space. Therefore, the total space used per guess is $k \cdot 2^{O(d)} \cdot \poly(\log n)$, yielding an overall space complexity of $k \cdot 2^{O(d)} \cdot \poly(\log n)$ across all $O(\log n)$ guesses.
This completes the proof of \Cref{thm:low_dim_stream}.
\end{proof}

}

     \paragraph*{Acknowledgments. }
    We thank Sepideh Mahabadi for pointing out earlier work on robust coresets; and Chris Schwiegelshohn and Sandeep Silwal for pointing out to use the Kirszbraun Theorem in the proofs of \Cref{thm:doubling_diameter,thm:JL_preserves_constrained_kcenter}.  
    {\small 
    \bibliographystyle{alphaurl}
    \bibliography{references.bib}

@inproceedings{JiangKS24,
  author       = {Shaofeng H.{-}C. Jiang and
                  Robert Krauthgamer and
                  Shay Sapir},
  title        = {Moderate Dimension Reduction for k-Center Clustering},
  booktitle    = {40th International Symposium on Computational Geometry, {SoCG}},
  series       = {LIPIcs},
  volume       = {293},
  pages        = {64:1--64:16},
  publisher    = {Schloss Dagstuhl - Leibniz-Zentrum f{\"{u}}r Informatik},
  year         = {2024},
  doi          = {10.4230/LIPICS.SOCG.2024.64}
}

@inproceedings{DBLP:conf/soda/Indyk03,
  author    = {Piotr Indyk},
  title     = {Better algorithms for high-dimensional proximity problems via asymmetric
               embeddings},
  booktitle = {Proceedings of the 14th Annual {ACM-SIAM} Symposium on Discrete
               Algorithms, {SODA}},
  pages     = {539--545},
  year      = {2003},
  url       = {http://dl.acm.org/citation.cfm?id=644108.644200},
  timestamp = {Fri, 07 Dec 2012 17:02:09 +0100},
  biburl    = {https://dblp.org/rec/conf/soda/Indyk03.bib},
  bibsource = {dblp computer science bibliography, https://dblp.org}
}

@article{ChanP14,
  author       = {Timothy M. Chan and
                  Vinayak Pathak},
  title        = {Streaming and dynamic algorithms for minimum enclosing balls in high
                  dimensions},
  journal      = {Comput. Geom.},
  volume       = {47},
  number       = {2},
  pages        = {240--247},
  year         = {2014},
  doi          = {10.1016/J.COMGEO.2013.05.007}
}

@article{Batir08,
  author       = {Necdet Batir},
  title        = {Inequalities for the gamma function},
  journal      = {Arch. Math.},
  volume       = {91},
  pages        = {554–563},
  year         = {2008},
  doi          = {10.1007/s00013-008-2856-9}
}

@article{DBLP_arxiv:CJK+22,
  title={Streaming Facility Location in High Dimension via Geometric Hashing},
  author={Czumaj, Artur and Filtser, Arnold and Jiang, Shaofeng H-C and Krauthgamer, Robert and Vesel{\`y}, Pavel and Yang, Mingwei},
  year={2022},
  journal      = {CoRR},
  eprinttype    = {arXiv},
  eprint       = {2204.02095},
  note = {The latest version has additional results compared to the preliminary version in \cite{DBLP:conf/focs/CzumajJK0Y22}}, 
}

@inproceedings{10.1145/1007352.1007413_Indyk04,
author = {Indyk, Piotr},
title = {Algorithms for Dynamic Geometric Problems over Data Streams},
year = {2004},
isbn = {1581138520},
doi = {10.1145/1007352.1007413},
booktitle = {Proceedings of the 36th Annual ACM Symposium on Theory of Computing},
pages = {373–380},
numpages = {8},
series = {STOC}
}

@inproceedings{DBLP:conf/stoc/ChenJLW22,
  author       = {Xi Chen and
                  Rajesh Jayaram and
                  Amit Levi and
                  Erik Waingarten},
  title        = {New streaming algorithms for high dimensional {EMD} and {MST}}, 
 booktitle    = {54th Annual Symposium on Theory of Computing, STOC},
  pages        = {222--233},
  publisher    = {{ACM}},
  year         = {2022},
  doi          = {10.1145/3519935.3519979},
}

@inproceedings{DBLP:conf/stoc/ChenCJLW23,
  author       = {Xi Chen and
                  Vincent Cohen{-}Addad and
                  Rajesh Jayaram and
                  Amit Levi and
                  Erik Waingarten},
  title        = {{Streaming Euclidean MST to a Constant Factor}},
  booktitle    = {Proceedings of the 55th Annual {ACM} Symposium on Theory of Computing,
                  {STOC}},
  pages        = {156--169},
  year         = {2023},
  url          = {https://doi.org/10.1145/3564246.3585168},
  doi          = {10.1145/3564246.3585168},
  timestamp    = {Mon, 22 May 2023 13:15:17 +0200},
  biburl       = {https://dblp.org/rec/conf/stoc/ChenCJLW23.bib},
  bibsource    = {dblp computer science bibliography, https://dblp.org}
}

@inproceedings{DBLP:conf/focs/AndoniBIW09,
  author       = {Alexandr Andoni and
                  Khanh {Do Ba} and
                  Piotr Indyk and
                  David P. Woodruff},
  title        = {Efficient Sketches for Earth-Mover Distance, with Applications},
  booktitle    = {50th Annual {IEEE} Symposium on Foundations of Computer Science, {FOCS} 2009},
  pages        = {324--330},
  publisher    = {{IEEE} Computer Society},
  year         = {2009},
  doi          = {10.1109/FOCS.2009.25},
}

@inproceedings{DBLP:conf/icml/BravermanFLSY17,
  author       = {Vladimir Braverman and
                  Gereon Frahling and
                  Harry Lang and
                  Christian Sohler and
                  Lin F. Yang},
  title        = {Clustering High Dimensional Dynamic Data Streams},
  booktitle    = {Proceedings of the 34th International Conference on Machine Learning,
                  {ICML}},
  series       = {Proceedings of Machine Learning Research},
  volume       = {70},
  pages        = {576--585},
  publisher    = {{PMLR}},
  year         = {2017},
  url          = {http://proceedings.mlr.press/v70/braverman17a.html},
  timestamp    = {Mon, 26 Jun 2023 20:40:35 +0200},
  biburl       = {https://dblp.org/rec/conf/icml/BravermanFLSY17.bib},
  bibsource    = {dblp computer science bibliography, https://dblp.org}
}

@article{hu2018nearly,
  title={Nearly Optimal Dynamic $k$-Means Clustering for High-Dimensional Data},
  author={Hu, Wei and Song, Zhao and Yang, Lin F. and Zhong, Peilin},
  year={2018},
  journal      = {CoRR},
  eprinttype    = {arXiv},
  eprint       = {1802.00459},
}

@article{DBLP:journals/siamcomp/CharikarCFM04,
  author       = {Moses Charikar and
                  Chandra Chekuri and
                  Tom{\'{a}}s Feder and
                  Rajeev Motwani},
  title        = {Incremental Clustering and Dynamic Information Retrieval},
  journal      = {{SIAM} J. Comput.},
  volume       = {33},
  number       = {6},
  pages        = {1417--1440},
  year         = {2004},
  url          = {https://doi.org/10.1137/S0097539702418498},
  doi          = {10.1137/S0097539702418498},
  timestamp    = {Thu, 02 Jan 2020 12:13:44 +0100},
  biburl       = {https://dblp.org/rec/journals/siamcomp/CharikarCFM04.bib},
  bibsource    = {dblp computer science bibliography, https://dblp.org}
}

@inproceedings{DBLP:conf/approx/McCutchenK08,
  author       = {Richard Matthew McCutchen and
                  Samir Khuller},
  title        = {Streaming Algorithms for $k$-Center Clustering with Outliers and with
                  Anonymity},
  booktitle    = {Approximation, Randomization and Combinatorial Optimization. Algorithms and Techniques},
  series       = {Lecture Notes in Computer Science},
  volume       = {5171},
  pages        = {165--178},
  publisher    = {Springer},
  year         = {2008},
  url          = {https://doi.org/10.1007/978-3-540-85363-3\_14},
  doi          = {10.1007/978-3-540-85363-3\_14},
  timestamp    = {Tue, 21 Sep 2021 09:36:24 +0200},
  biburl       = {https://dblp.org/rec/conf/approx/McCutchenK08.bib},
  bibsource    = {dblp computer science bibliography, https://dblp.org}
}

@article{AS15,
  author =       {Pankaj K. Agarwal and R. Sharathkumar},
  title =        {Streaming Algorithms for Extent Problems in High
                  Dimensions},
  journal =      {Algorithmica},
  volume =       {72},
  number =       {1},
  pages =        {83-98},
  year =         {2015},
  doi =          {10.1007/s00453-013-9846-4},
}

@article{DBLP:journals/algorithmica/AgarwalP02,
  author       = {Pankaj K. Agarwal and
                  Cecilia Magdalena Procopiuc},
  title        = {Exact and Approximation Algorithms for Clustering},
  journal      = {Algorithmica},
  volume       = {33},
  number       = {2},
  pages        = {201--226},
  year         = {2002},
  url          = {https://doi.org/10.1007/s00453-001-0110-y},
  doi          = {10.1007/s00453-001-0110-y},
  timestamp    = {Wed, 17 May 2017 14:25:13 +0200},
  biburl       = {https://dblp.org/rec/journals/algorithmica/AgarwalP02.bib},
  bibsource    = {dblp computer science bibliography, https://dblp.org}
}

@inproceedings{BergBM23,
  author       = {Mark de Berg and
                  Leyla Biabani and
                  Morteza Monemizadeh},
  title        = {$k$-Center Clustering with Outliers in the {MPC} and Streaming Model},
  booktitle    = {{IEEE} International Parallel and Distributed Processing Symposium,
                  {IPDPS} 2023},
  pages        = {853--863},
  publisher    = {{IEEE}},
  year         = {2023},
  doi          = {10.1109/IPDPS54959.2023.00090},
}

@InProceedings{dBMZ21,
  author =	{de Berg, Mark and Monemizadeh, Morteza and Zhong, Yu},
  title =	{$k$-Center Clustering with Outliers in the Sliding-Window Model},
  booktitle =	{29th Annual European Symposium on Algorithms, ESA},
  pages =	{13:1--13:13},
  series =	{Leibniz International Proceedings in Informatics (LIPIcs)},
  ISBN =	{978-3-95977-204-4},
  ISSN =	{1868-8969},
  year =	{2021},
  volume =	{204},
  publisher =	{Schloss Dagstuhl -- Leibniz-Zentrum f{\"u}r Informatik},
  doi =		{10.4230/LIPIcs.ESA.2021.13},
}

@article{CeccarelloPP19,
  author       = {Matteo Ceccarello and
                  Andrea Pietracaprina and
                  Geppino Pucci},
  title        = {Solving $k$-center clustering (with outliers) in {MapReduce} and Streaming, almost as accurately as sequentially},
  journal      = {Proc. {VLDB} Endow.},
  volume       = {12},
  number       = {7},
  pages        = {766--778},
  year         = {2019},
  doi          = {10.14778/3317315.3317319},
}

@article{DBLP:journals/comgeo/KimA15,
  author       = {Sang{-}Sub Kim and
                  Hee{-}Kap Ahn},
  title        = {An improved data stream algorithm for clustering},
  journal      = {Comput. Geom.},
  volume       = {48},
  number       = {9},
  pages        = {635--645},
  year         = {2015},
  url          = {https://doi.org/10.1016/j.comgeo.2015.06.003},
  doi          = {10.1016/j.comgeo.2015.06.003},
  timestamp    = {Thu, 11 Feb 2021 23:27:40 +0100},
  biburl       = {https://dblp.org/rec/journals/comgeo/KimA15.bib},
  bibsource    = {dblp computer science bibliography, https://dblp.org}
}

@article{DBLP:journals/talg/IndykN07,
  author    = {Piotr Indyk and
               Assaf Naor},
  title     = {Nearest-neighbor-preserving embeddings},
  journal   = {{ACM} Trans. Algorithms},
  volume    = {3},
  number    = {3},
  pages     = {31},
  year      = {2007},
  url       = {https://doi.org/10.1145/1273340.1273347},
  doi       = {10.1145/1273340.1273347},
  timestamp = {Tue, 06 Nov 2018 12:51:20 +0100},
  biburl    = {https://dblp.org/rec/journals/talg/IndykN07.bib},
  bibsource = {dblp computer science bibliography, https://dblp.org}
}

@article{article_JohnsonLindenstrauss,
author = {Johnson, William and Lindenstrauss, Joram},
year = {1984},
month = {01},
pages = {189-206},
title = {Extensions of {Lipschitz} maps into a {Hilbert} space},
volume = {26},
isbn = {9780821850305},
journal = {Contemporary Mathematics},
doi = {10.1090/conm/026/737400}
}

@inproceedings{DBLP:conf/focs/LarsenN17,
  author       = {Kasper Green Larsen and
                  Jelani Nelson},
  title        = {Optimality of the {Johnson-Lindenstrauss Lemma}},
  booktitle    = {58th {IEEE} Annual Symposium on Foundations of Computer Science, {FOCS}
                 },
  pages        = {633--638},
  year         = {2017},
  doi          = {10.1109/FOCS.2017.64}
}

@inproceedings{DBLP:conf/stoc/MakarychevMR19,
  author       = {Konstantin Makarychev and
                  Yury Makarychev and
                  Ilya P. Razenshteyn},
  title        = {Performance of {Johnson-Lindenstrauss} transform for $k$-means
                  and $k$-medians clustering},
  booktitle    = {Proceedings of the 51st Annual {ACM} {SIGACT} Symposium on Theory
                  of Computing, {STOC}},
  pages        = {1027--1038},
  year         = {2019},
  doi          = {10.1145/3313276.3316350},
}

@inproceedings{DBLP:conf/wads/LammersenSS09,
  author       = {Christiane Lammersen and
                  Anastasios Sidiropoulos and
                  Christian Sohler},
  title        = {Streaming Embeddings with Slack},
  booktitle    = {11th International Symposium on Algorithms and Data Structures, {WADS}},
  series       = {Lecture Notes in Computer Science},
  volume       = {5664},
  pages        = {483--494},
  publisher    = {Springer},
  year         = {2009},
  doi          = {10.1007/978-3-642-03367-4\_42},
}

@phdthesis{lammersen2010approximation,
  title={Approximation Techniques for Facility Location and Their Applications in Metric Embeddings},
  author={Lammersen, Christiane},
  school = {Dortmund, Technische Universität},
  year={2010}, 
}

@inproceedings{DBLP:conf/stoc/ChenJK23,
  author       = {Xiaoyu Chen and
                  Shaofeng H.{-}C. Jiang and
                  Robert Krauthgamer},
  title        = {Streaming {Euclidean} {Max-Cut}: Dimension vs Data Reduction},
  booktitle    = {Proceedings of the 55th Annual {ACM} Symposium on Theory of Computing,
                  {STOC}},
  pages        = {170--182},
  year         = {2023},
  url          = {https://doi.org/10.1145/3564246.3585170},
  doi          = {10.1145/3564246.3585170},
}

@inproceedings{DBLP:conf/icml/NarayananSIZ21,
  author       = {Shyam Narayanan and
                  Sandeep Silwal and
                  Piotr Indyk and
                  Or Zamir},
  title        = {Randomized Dimensionality Reduction for Facility Location and Single-Linkage
                  Clustering},
  booktitle    = {Proceedings of the 38th International Conference on Machine Learning,
                  {ICML}},
  series       = {Proceedings of Machine Learning Research},
  volume       = {139},
  pages        = {7948--7957},
  publisher    = {{PMLR}},
  year         = {2021},
  url          = {http://proceedings.mlr.press/v139/narayanan21b.html},
  timestamp    = {Wed, 25 Aug 2021 17:11:17 +0200},
  biburl       = {https://dblp.org/rec/conf/icml/NarayananSIZ21.bib},
  bibsource    = {dblp computer science bibliography, https://dblp.org}
}

@inproceedings{DBLP:journals/corr/CharikarWaingarten23,
  author       = {Moses Charikar and
                  Erik Waingarten},
  title        = {The Johnson-Lindenstrauss Lemma for Clustering and Subspace Approximation:
                  From Coresets to Dimension Reduction},
  booktitle    = {Proceedings of the 2025 Annual {ACM-SIAM} Symposium on Discrete Algorithms,
                  {SODA}},
  pages        = {3172--3209},
  publisher    = {{SIAM}},
  year         = {2025},
  doi          = {10.1137/1.9781611978322.102}
}

@inproceedings{DBLP:conf/nips/BoutsidisZD10,
  author       = {Christos Boutsidis and
                  Anastasios Zouzias and
                  Petros Drineas},
  title        = {Random Projections for $k$-means Clustering},
  booktitle    = {24th Annual
                  Conference on Neural Information Processing Systems, {NeurIPS}},
  pages        = {298--306},
  publisher    = {Curran Associates, Inc.},
  year         = {2010},
  url          = {https://proceedings.neurips.cc/paper/2010/hash/73278a4a86960eeb576a8fd4c9ec6997-Abstract.html}
}

@inproceedings{DBLP:conf/stoc/CohenEMMP15,
  author       = {Michael B. Cohen and
                  Sam Elder and
                  Cameron Musco and
                  Christopher Musco and
                  Madalina Persu},
  title        = {Dimensionality Reduction for k-Means Clustering and Low Rank Approximation},
  booktitle    = {Proceedings of the Forty-Seventh Annual {ACM} on Symposium on Theory
                  of Computing, {STOC}},
  pages        = {163--172},
  year         = {2015},
  doi          = {10.1145/2746539.2746569}
}

@inproceedings{DBLP:conf/stoc/BecchettiBC0S19,
  author       = {Luca Becchetti and
                  Marc Bury and
                  Vincent Cohen{-}Addad and
                  Fabrizio Grandoni and
                  Chris Schwiegelshohn},
  title        = {Oblivious dimension reduction for $k$-means: beyond subspaces
                  and the {Johnson-Lindenstrauss Lemma}},
  booktitle    = {Proceedings of the 51st Annual {ACM} {SIGACT} Symposium on Theory
                  of Computing, {STOC}},
  pages        = {1039--1050},
  year         = {2019},
  doi          = {10.1145/3313276.3316318}
}

@article{DBLP:journals/talg/Anagnostopoulos18,
  author       = {Evangelos Anagnostopoulos and
                  Ioannis Z. Emiris and
                  Ioannis Psarros},
  title        = {Randomized Embeddings with Slack and High-Dimensional Approximate
                  Nearest Neighbor},
  journal      = {{ACM} Trans. Algorithms},
  volume       = {14},
  number       = {2},
  pages        = {18:1--18:21},
  year         = {2018},
  doi          = {10.1145/3178540}
}

@inproceedings{DBLP:conf/stoc/IndykM98,
  author       = {Piotr Indyk and
                  Rajeev Motwani},
  title        = {Approximate Nearest Neighbors: Towards Removing the Curse of Dimensionality},
  booktitle    = {Proceedings of the Thirtieth Annual {ACM} Symposium on the Theory
                  of Computing, {STOC}},
  pages        = {604--613},
  year         = {1998},
  url          = {https://doi.org/10.1145/276698.276876},
  doi          = {10.1145/276698.276876},
  timestamp    = {Thu, 02 Jan 2020 12:13:44 +0100},
  biburl       = {https://dblp.org/rec/conf/stoc/IndykM98.bib},
  bibsource    = {dblp computer science bibliography, https://dblp.org}
}

@article{DBLP:journals/rsa/DasguptaG03,
  author       = {Sanjoy Dasgupta and
                  Anupam Gupta},
  title        = {An elementary proof of a theorem of {Johnson and Lindenstrauss}},
  journal      = {Random Struct. Algorithms},
  volume       = {22},
  number       = {1},
  pages        = {60--65},
  year         = {2003},
  url          = {https://doi.org/10.1002/rsa.10073},
  doi          = {10.1002/rsa.10073},
  timestamp    = {Wed, 18 Aug 2021 17:14:14 +0200},
  biburl       = {https://dblp.org/rec/journals/rsa/DasguptaG03.bib},
  bibsource    = {dblp computer science bibliography, https://dblp.org}
}

@article{DBLP:journals/jcss/Achlioptas03,
  author       = {Dimitris Achlioptas},
  title        = {Database-friendly random projections: {Johnson-Lindenstrauss} with binary
                  coins},
  journal      = {J. Comput. Syst. Sci.},
  volume       = {66},
  number       = {4},
  pages        = {671--687},
  year         = {2003},
  url          = {https://doi.org/10.1016/S0022-0000(03)00025-4},
  doi          = {10.1016/S0022-0000(03)00025-4},
  timestamp    = {Tue, 16 Feb 2021 14:04:04 +0100},
  biburl       = {https://dblp.org/rec/journals/jcss/Achlioptas03.bib},
  bibsource    = {dblp computer science bibliography, https://dblp.org}
}

@article{DBLP:journals/rsa/Matousek08,
  author       = {Jir{\'{\i}} Matousek},
  title        = {On variants of the {Johnson-Lindenstrauss Lemma}},
  journal      = {Random Struct. Algorithms},
  volume       = {33},
  number       = {2},
  pages        = {142--156},
  year         = {2008},
  url          = {https://doi.org/10.1002/rsa.20218},
  doi          = {10.1002/rsa.20218},
  timestamp    = {Thu, 24 Jan 2019 16:07:55 +0100},
  biburl       = {https://dblp.org/rec/journals/rsa/Matousek08.bib},
  bibsource    = {dblp computer science bibliography, https://dblp.org}
}

@article{KLARTAG2005229,
title = {Empirical processes and random projections},
journal = {Journal of Functional Analysis},
volume = {225},
number = {1},
pages = {229-245},
year = {2005},
issn = {0022-1236},
doi = {10.1016/j.jfa.2004.10.009},
author = {Boaz Klartag and Shahar Mendelson}
}

@article{10.1214/aos/1015957395_Laurent_Massart,
author = {B. Laurent and P. Massart},
title = {Adaptive estimation of a quadratic functional by model selection},
volume = {28},
journal = {The Annals of Statistics},
number = {5},
publisher = {Institute of Mathematical Statistics},
pages = {1302 -- 1338},
keywords = {$l_p$-bodies, adaptive estimation, Besov bodies, efficient estimation, Gaussian sequence model, Model selection, quadratic functionals},
year = {2000},
doi = {10.1214/aos/1015957395},
URL = {https://doi.org/10.1214/aos/1015957395}
}

@article{rogers_1963, 
title={Covering a sphere with spheres}, 
volume={10}, 
DOI={10.1112/S0025579300004083}, 
number={2}, 
journal={Mathematika}, 
publisher={London Mathematical Society}, 
author={Rogers, C. A.}, 
year={1963}, 
pages={157–164}
}

@article{DBLP:journals/tcs/Gonzalez85,
  author       = {Teofilo F. Gonzalez},
  title        = {Clustering to Minimize the Maximum Intercluster Distance},
  journal      = {Theor. Comput. Sci.},
  volume       = {38},
  pages        = {293--306},
  year         = {1985},
  doi          = {10.1016/0304-3975(85)90224-5},
  timestamp    = {Wed, 17 Feb 2021 21:59:18 +0100},
  biburl       = {https://dblp.org/rec/journals/tcs/Gonzalez85.bib},
  bibsource    = {dblp computer science bibliography, https://dblp.org}
}

@inproceedings{DBLP:conf/soda/CharikarKMN01,
  author       = {Moses Charikar and
                  Samir Khuller and
                  David M. Mount and
                  Giri Narasimhan},
  title        = {Algorithms for facility location problems with outliers},
  booktitle    = {Proceedings of the Twelfth Annual Symposium on Discrete Algorithms, {SODA}},
  pages        = {642--651},
  publisher    = {{ACM/SIAM}},
  year         = {2001},
  url          = {http://dl.acm.org/citation.cfm?id=365411.365555}
}

@article{DBLP:journals/siamdm/KhullerS00,
  author       = {Samir Khuller and
                  Yoram J. Sussmann},
  title        = {The Capacitated \emph{K}-Center Problem},
  journal      = {{SIAM} J. Discret. Math.},
  volume       = {13},
  number       = {3},
  pages        = {403--418},
  year         = {2000},
  doi          = {10.1137/S0895480197329776}
}

@inproceedings{DBLP:conf/nips/Chierichetti0LV17,
  author       = {Flavio Chierichetti and
                  Ravi Kumar and
                  Silvio Lattanzi and
                  Sergei Vassilvitskii},
  title        = {Fair Clustering Through Fairlets},
  booktitle    = {Annual Conference
                  on Neural Information Processing Systems, {NeurIPS}},
  pages        = {5029--5037},
  year         = {2017},
  url          = {https://proceedings.neurips.cc/paper/2017/hash/978fce5bcc4eccc88ad48ce3914124a2-Abstract.html}
}

@inproceedings{GKL03,
  author       = {Anupam Gupta and
                  Robert Krauthgamer and
                  James R. Lee},
  title        = {Bounded Geometries, Fractals, and Low-Distortion Embeddings},
  booktitle    = {44th Symposium on Foundations of Computer Science, {FOCS}},
  pages        = {534--543},
  publisher    = {{IEEE} Computer Society},
  year         = {2003},
  doi          = {10.1109/SFCS.2003.1238226}
}

@article{Clarkson99,
  author       = {Kenneth L. Clarkson},
  title        = {Nearest Neighbor Queries in Metric Spaces},
  journal      = {Discret. Comput. Geom.},
  volume       = {22},
  number       = {1},
  pages        = {63--93},
  year         = {1999},
  doi          = {10.1007/PL00009449}
}

@inproceedings{AIK08,
  author       = {Alexandr Andoni and
                  Piotr Indyk and
                  Robert Krauthgamer},
  title        = {Earth mover distance over high-dimensional spaces},
  booktitle    = {Proceedings of the Nineteenth Annual {ACM-SIAM} Symposium on Discrete
                  Algorithms, {SODA}},
  pages        = {343--352},
  year         = {2008},
  url          = {http://dl.acm.org/citation.cfm?id=1347082.1347120}
}

@inproceedings{WY22,
  author       = {David P. Woodruff and
                  Taisuke Yasuda},
  title        = {High-Dimensional Geometric Streaming in Polynomial Space},
  booktitle    = {63rd Annual Symposium on Foundations of Computer Science, {FOCS}
                  },
  pages        = {732--743},
  publisher    = {{IEEE}},
  year         = {2022},
  doi          = {10.1109/FOCS54457.2022.00075},
}

@inproceedings{DBLP:conf/focs/BravermanCJKST022,
  author       = {Vladimir Braverman and
                  Vincent Cohen{-}Addad and
                  Shaofeng H.{-}C. Jiang and
                  Robert Krauthgamer and
                  Chris Schwiegelshohn and
                  Mads Bech Toftrup and
                  Xuan Wu},
  title        = {The Power of Uniform Sampling for Coresets},
  booktitle    = {63rd {IEEE} Annual Symposium on Foundations of Computer Science, {FOCS}},
  pages        = {462--473},
  year         = {2022},
  url          = {https://doi.org/10.1109/FOCS54457.2022.00051},
  doi          = {10.1109/FOCS54457.2022.00051}
}

@inproceedings{DBLP:conf/waoa/SchmidtSS19,
  author       = {Melanie Schmidt and
                  Chris Schwiegelshohn and
                  Christian Sohler},
  title        = {Fair Coresets and Streaming Algorithms for Fair $k$-means},
  booktitle    = {Approximation and Online Algorithms - 17th International Workshop,
                  {WAOA}},
  series       = {Lecture Notes in Computer Science},
  volume       = {11926},
  pages        = {232--251},
  publisher    = {Springer},
  year         = {2019},
  url          = {https://doi.org/10.1007/978-3-030-39479-0\_16},
  doi          = {10.1007/978-3-030-39479-0\_16}
}

@inproceedings{DBLP:conf/nips/HuangJV19,
  author       = {Lingxiao Huang and
                  Shaofeng H.{-}C. Jiang and
                  Nisheeth K. Vishnoi},
  title        = {Coresets for Clustering with Fairness Constraints},
  booktitle    = {Advances in Neural Information Processing Systems 32, NeurIPS },
  pages        = {7587--7598},
  year         = {2019},
  url          = {https://proceedings.neurips.cc/paper/2019/hash/810dfbbebb17302018ae903e9cb7a483-Abstract.html}
}

@inproceedings{DBLP:conf/icalp/BandyapadhyayFS21,
  author       = {Sayan Bandyapadhyay and
                  Fedor V. Fomin and
                  Kirill Simonov},
  title        = {On Coresets for Fair Clustering in Metric and {Euclidean} Spaces and
                  Their Applications},
  booktitle    = {48th International Colloquium on Automata, Languages, and Programming,
                  {ICALP} },
  series       = {LIPIcs},
  volume       = {198},
  pages        = {23:1--23:15},
  publisher    = {Schloss Dagstuhl - Leibniz-Zentrum f{\"{u}}r Informatik},
  year         = {2021},
  doi          = {10.4230/LIPICS.ICALP.2021.23}
}

@inproceedings{Bercea0KKRS019,
  author       = {Ioana Oriana Bercea and
                  Martin Gro{\ss} and
                  Samir Khuller and
                  Aounon Kumar and
                  Clemens R{\"{o}}sner and
                  Daniel R. Schmidt and
                  Melanie Schmidt},
  title        = {On the Cost of Essentially Fair Clusterings},
  booktitle    = {Approximation, Randomization, and Combinatorial Optimization. Algorithms
                  and Techniques, {APPROX/RANDOM}},
  series       = {LIPIcs},
  volume       = {145},
  pages        = {18:1--18:22},
  publisher    = {Schloss Dagstuhl - Leibniz-Zentrum f{\"{u}}r Informatik},
  year         = {2019},
  doi          = {10.4230/LIPICS.APPROX-RANDOM.2019.18}
}

@article{Bar-IlanKP93,
  author       = {Judit Bar{-}Ilan and
                  Guy Kortsarz and
                  David Peleg},
  title        = {How to Allocate Network Centers},
  journal      = {J. Algorithms},
  volume       = {15},
  number       = {3},
  pages        = {385--415},
  year         = {1993},
  doi          = {10.1006/jagm.1993.1047}, 
}

@book{vershynin_2018, 
place={Cambridge}, 
series={Cambridge Series in Statistical and Probabilistic Mathematics}, 
title={High-Dimensional Probability: An Introduction with Applications in Data Science}, 
DOI={10.1017/9781108231596}, 
publisher={Cambridge University Press}, 
author={Vershynin, Roman}, 
year={2018}, 
collection={Cambridge Series in Statistical and Probabilistic Mathematics}}

@Article{Nelson20, 
 Author = {Nelson, Jelani},
 Title = {Dimensionality reduction in {Euclidean} space},
 Journal = {Notices of the American Mathematical Society},
 ISSN = {0002-9920},
 Volume = {67},
 Number = {10},
 Pages = {1498--1507},
 Year = {2020},
 Language = {English},
 DOI = {10.1090/noti2166},
}

@InProceedings{CSS16, 
  author =	{{Cohen-Addad}, Vincent and Schwiegelshohn, Chris and Sohler, Christian},
  title =	{{Diameter and $k$-Center in Sliding Windows}},
  booktitle =	{43rd International Colloquium on Automata, Languages, and Programming, {ICALP} },
  pages =	{19:1--19:12},
  series =	{Leibniz International Proceedings in Informatics (LIPIcs)},
  ISBN =	{978-3-95977-013-2},
  ISSN =	{1868-8969},
  year =	{2016},
  volume =	{55},
  publisher =	{Schloss Dagstuhl -- Leibniz-Zentrum f{\"u}r Informatik},
  doi =		{10.4230/LIPIcs.ICALP.2016.19},
}

@inproceedings{DBLP:conf/stoc/AilonC06,
  author       = {Nir Ailon and
                  Bernard Chazelle},
  title        = {Approximate nearest neighbors and the fast {Johnson-Lindenstrauss} transform},
  booktitle    = {Proceedings of the 38th Annual {ACM} Symposium on Theory of Computing, {STOC}},
  pages        = {557--563},
  year         = {2006},
  doi          = {10.1145/1132516.1132597}
}

@inproceedings{DBLP:conf/stoc/DasguptaKS10,
  author       = {Anirban Dasgupta and
                  Ravi Kumar and
                  Tam{\'{a}}s Sarl{\'{o}}s},
  title        = {A sparse {Johnson--Lindenstrauss} transform},
  booktitle    = {Proceedings of the 42nd {ACM} Symposium on Theory of Computing, {STOC}},
  pages        = {341--350},
  year         = {2010},
  doi          = {10.1145/1806689.1806737}
}

@Article{FM90,
 Author = {Frankl, Peter and Maehara, Hiroshi},
 Title = {Some geometric applications of the beta distribution},
 FJournal = {Annals of the Institute of Statistical Mathematics},
 Journal = {Ann. Inst. Stat. Math.},
 ISSN = {0020-3157},
 Volume = {42},
 Number = {3},
 Pages = {463--474},
 Year = {1990},
 DOI = {10.1007/BF00049302},
}

@inproceedings{AbbarAIMahabadiV13,
  author       = {Sofiane Abbar and
                  Sihem Amer{-}Yahia and
                  Piotr Indyk and
                  Sepideh Mahabadi and
                  Kasturi R. Varadarajan},
  title        = {Diverse near neighbor problem},
  booktitle    = {Symposium on Computational Geometry, {SoCG}},
  pages        = {207--214},
  publisher    = {{ACM}},
  year         = {2013},
  doi          = {10.1145/2462356.2462401}
}

@article{AgarwalHY08_peeling,
  author       = {Pankaj K. Agarwal and
                  Sariel Har{-}Peled and
                  Hai Yu},
  title        = {Robust Shape Fitting via Peeling and Grating Coresets},
  journal      = {Discret. Comput. Geom.},
  volume       = {39},
  number       = {1-3},
  pages        = {38--58},
  year         = {2008},
  doi          = {10.1007/S00454-007-9013-2}
}

@article{Kir34,
author={Kirszbraun, M.D.},
title={\"{U}ber die zusammenziehenden und Lipschitzchen Transformationen},
journal={Fundam. Math.},
vol=22,
pages={77--108},
year=1934,
}

@inproceedings{KaneNW10,
  author       = {Daniel M. Kane and
                  Jelani Nelson and
                  David P. Woodruff},
  title        = {An optimal algorithm for the distinct elements problem},
  booktitle    = {Proceedings of the Twenty-Ninth {ACM} {SIGMOD-SIGACT-SIGART} Symposium
                  on Principles of Database Systems, {PODS}},
  pages        = {41--52},
  publisher    = {{ACM}},
  year         = {2010},
  doi          = {10.1145/1807085.1807094}
}

@inproceedings{HMV25,
  author = {Halld\'{o}rsson, Magn\'{u}s M. and Matsakis, Nicolaos and Vesel\'{y}, Pavel},
  title = {Streaming Diameter of High-Dimensional Points},
  booktitle = {33rd Annual European Symposium on Algorithms (ESA 2025)},
  pages = {58:1--58:10},
  series = {Leibniz International Proceedings in Informatics (LIPIcs)},
  year = {2025},
  volume = {351},
  publisher = {Schloss Dagstuhl -- Leibniz-Zentrum f{\"u}r Informatik},
  doi = {10.4230/LIPIcs.ESA.2025.58},
}

@inproceedings{HJKY25,
author = {Huang, Lingxiao and Jiang, Shaofeng H.-C. and Krauthgamer, Robert and Yue, Di},
title = {Near-Optimal Dimension Reduction for Facility Location},
year = {2025},
publisher = {ACM},
booktitle = {57th Annual ACM Symposium on Theory of Computing},
pages = {665–676},
numpages = {12},
series = {STOC 2025},
doi = {10.1145/3717823.3718214},
}

@InProceedings{JKSSY26,
author = {Jiang, Shaofeng H.-C. and Krauthgamer, Robert and Sapir, Shay and Silwal, Sandeep and Yue, Di},
title = {{Dimension Reduction for Clustering: The Curious Case of Discrete Centers}},
booktitle = {17th Innovations in Theoretical Computer Science Conference (ITCS 2026)},
pages = {82:1--82:23},
series = {Leibniz International Proceedings in Informatics (LIPIcs)},
year = {2026},
volume = {362},
publisher = {Schloss Dagstuhl -- Leibniz-Zentrum f{\"u}r Informatik},
doi = {10.4230/LIPIcs.ITCS.2026.82},
}

@InProceedings{GJKSSSW25,
  title = 	 {Randomized Dimensionality Reduction for {E}uclidean Maximization and Diversity Measures},
  author =       {Gao, Jie and Jayaram, Rajesh and Kolbe, Benedikt and Sapir, Shay and Schwiegelshohn, Chris and Silwal, Sandeep and Waingarten, Erik},
  booktitle = 	 {Proceedings of the 42nd International Conference on Machine Learning},
  pages = 	 {18363--18385},
  year = 	 {2025},
  volume = 	 {267},
  series = 	 {Proceedings of Machine Learning Research},
  publisher =    {PMLR},
  url = 	 {https://proceedings.mlr.press/v267/gao25e.html},
}

@inproceedings{IzzoSZ21,
  author       = {Zachary Izzo and
                  Sandeep Silwal and
                  Samson Zhou},
  title        = {Dimensionality Reduction for Wasserstein Barycenter},
  booktitle    = {Advances in Neural Information Processing Systems 34: Annual Conference
                  on Neural Information Processing Systems {NeurIPS}},
  pages        = {15582--15594},
  year         = {2021},
  url          = {https://proceedings.neurips.cc/paper/2021/hash/8346db44a721fa863ca38180638bad3d-Abstract.html}
}
    } \appendix

\end{document}